\documentclass[preprint,12pt]{elsarticle}
\usepackage[T1]{fontenc}
\usepackage{amssymb,amsthm,amsmath}
\usepackage{tabularx}
\usepackage{xcolor}
\usepackage[small]{complexity}
\usepackage[skins,listings]{tcolorbox}
\usepackage{hyperref}
\usepackage{caption}
\usepackage[nameinlink,capitalise,noabbrev]{cleveref}
\usepackage{tikz}
\usetikzlibrary {graphs}
\usetikzlibrary {graphs.standard}
\usetikzlibrary{patterns.meta}
\hypersetup{colorlinks = true, citecolor = blue, linkcolor = blue, breaklinks = true, urlcolor=blue}
\Crefname{figure}{Fig.}{Figs.}
\Crefname{lemma}{Lemma}{Lemmas}
\newcommand{\kSAT}[1]{$#1$-$\SAT$}

\newcommand{\pname}[1]{\textcolor{black}{\textsc{#1}}}
\newtcolorbox[auto counter]{ProblemBox}[2][]{
	colframe=black!20!white,
	coltitle=black,
	arc=1.5mm,
	boxrule=.2mm,
	colbacktitle=white,
	colback=white,
	enhanced,
	adjusted title=flush left,
	attach boxed title to top center={yshift=-3mm,xshift=-3.5cm},
	boxed title style={colframe=white,righttitle=-3mm,lefttitle=-3mm},
	title=#2,#1
}

\newcommand{\problemClassic}[4]{
	\begin{ProblemBox}[label={#4},nameref={\pname{#1}}]{\pname{#1}}
		\begin{tabularx}{\textwidth}{lX}
			\textnormal{Input:} & \textnormal{#2} \\
                \textnormal{Question:} & \textnormal{#3}
		\end{tabularx}
		\vspace{-3,5mm}
	\end{ProblemBox}
}

\newtheorem{theorem}{Theorem}
\newtheorem{lemma}[theorem]{Lemma}
\newtheorem{corollary}[theorem]{Corollary}


\journal{Discrete Applied Mathematics}

\begin{document}

\makeatletter
\DeclareRobustCommand*{\nameref}{%
\color{black}%
        \@ifstar\T@nameref\T@nameref
        }%
\makeatother

\begin{frontmatter}

\title{The complexity of convexity number and percolation time in the cycle convexity}

\author[UFCA]{Carlos V.G.C. Lima}
\ead{vinicius.lima@ufca.edu.br}
\author[UFCA]{Thiago Marcilon\corref{cor1}}
\ead{thiago.marcilon@ufca.edu.br}
\author[UFC]{Pedro Paulo de Medeiros}
\ead{pedropmed1@gmail.com}
\affiliation[UFCA]{organization={CCT, Universidade Federal do Cariri},
            city={Juazeiro do Norte},
            country={Brazil}}
\affiliation[UFC]{organization={Departamento de Matemática, Universidade Federal do Ceará},
            city={Fortaleza},
            country={Brazil}}
\cortext[cor1]{Corresponding author}

\begin{abstract}
The subject of graph convexity is well explored in the literature, the so-called interval convexities above all. In this work, we explore the cycle convexity, an interval convexity whose interval function is $I(S) = S \cup \{u \mid G[S \cup \{u\}]$ has a cycle containing $u\}$. In this convexity, we prove that determine whether the convexity number of a graph $G$ is at least $k$ is \NP-complete and \W[1]-hard when parameterized by the size of the solution when $G$ is a thick spider, but polynomial when $G$ is an extended $P_4$-laden graph. We also prove that determining whether the percolation time of a graph is at least $k$ is \NP-complete even for fixed $k \geq 9$, but polynomial for cacti or for fixed $k\leq2$.
\end{abstract}

\begin{keyword}
graph convexity \sep cycle convexity \sep convexity number \sep percolation time.
\end{keyword}

\end{frontmatter}

\section{Introduction}
\label{sec:intro}
We assume the reader possesses basic knowledge in Graph Theory and Complexity Theory, citing \cite{Bondy2008, Garey1979, cygan2015} as references for further exploration of these subjects. For a graph $G = (V,E)$, let $V(G) = V$, and $E(G) = E$. For any $v \in V(G)$ and $S \subseteq V(G)$, let $G[S]$ be the subgraph of $G$ induced by the subset of vertices $S$, $G-v = G[V \setminus \{v\}]$ and $G-S = G[V \setminus S]$. Henceforth, we consider only simple and finite graphs.

A \emph{convexity space}~\cite{bryant1972}, or simply \emph{convexity}, is an ordered pair $(V,\mathcal{C})$ where~$V$ is a set and $\mathcal{C}$ is a subset family of $V$, called \emph{convex sets}, satisfying the following properties: (\textbf{C1}) $V,\varnothing \in \mathcal{C}$; (\textbf{C2}) for all $\mathcal{C}' \subseteq \mathcal{C}$, we have $\bigcap \mathcal{C}'\subseteq \mathcal{C}$; and (\textbf{C3}) every nested union of convex sets is convex. When $V$ is finite, we can disregard property (\textbf{C3}). 

If $S\subseteq V$ is a convex set, we say that $V\setminus S$ is \emph{co-convex}. The \emph{convex hull} of a set $S \subseteq V$ concerning $(V,\mathcal{C})$ is the smallest convex set $C\in \mathcal{C}$ such that $C \supseteq S$, and we denote the convex hull of the set $S$ by $H(S)$. If $H(S)=V$, we say that $S$ is a \emph{hull set}. Note that if $S$ is a hull set and $C$ is co-convex, then $S \cap C \neq \varnothing$, since otherwise $S \subseteq V \setminus C$, which implies that $H(S) \subseteq V \setminus C \neq V$, as $V \setminus C$ is convex.

Convexity spaces generalize the notion of convexity in Euclidean space, making it applicable to a broader range of mathematical structures. In this paper, we exclusively refer to convexity spaces $(V,\mathcal{C})$, where $V$ denotes the vertex set of some finite graph, which are called \emph{graph convexities}. 

Let $G=(V,E)$ be a graph and~$S \subseteq V$. Many graph convexities $(V,\mathcal{C})$ are defined by an \emph{interval function} $I : 2^V \rightarrow 2^V$, where $I^0(S) = S$ and $I^k(S) = I(I^{k-1}(S))$, for all $k \geq 1$. Thus, the convex hull of $S$ is defined as $H(S) = I^{\ell}(S)$, for some~$\ell \geq 0$, and such that $I^k(S) = I^{k+1}(S)$, for all~$k \geq \ell$. In this case, we call such a graph convexity as an \emph{interval convexity}. By this definition, the convex hull of a set $S$ can be obtained by applying the interval function iteratively to $S$ until the resulting set no longer expands. As a consequence of this, in an interval convexity with interval function $I$, $S$ is a convex set if and only if $I(S) = S$. Henceforth, if $S$ is a convex set/hull set in a convexity on a graph $G$, we simply say that $S$ is a convex set/hull set of $G$, since the convexity is always clear from context. 

Typically, in interval convexities, $I(S)$ contains all vertices in $S$ and all vertices in any path with endpoints in $S$ that have some specific property. Three of the most studied interval convexities are the \emph{$P_3$ convexity}~\cite{bollobas2006,centeno2010}, the \emph{geodesic convexity}~\cite{pelayo2013,everet} and the \emph{monophonic convexity}~\cite{du1988,faja}, where~$I(S)$ contains, in addition to all vertices in $S$, all vertices in every $P_3$ with endpoints in $S$, all vertices in every shortest path with endpoints in $S$, and all vertices in every induced path with endpoints in $S$, respectively.

In this paper, we work with the recently defined \emph{cycle convexity}~\cite{araujo2018a}. The cycle convexity is an interval convexity such that $I(S) = S \cup \{u \mid G[S \cup \{u\}]$ has a cycle containing $u\}$. According to Araujo \emph{et al.}~\cite{araujo2018a}, the main motivation behind the conception of the cycle convexity stemmed from its application in the examination of the tunnel number of a knot or link within Knot Theory.

In~\cite{araujo2018a}, the \emph{interval number} of a graph in the cycle convexity is investigated. The interval number of a graph $G$, denoted by $in(G)$, is the size of a smallest hull set $S$ of $G$, such that $I(S) = V(G)$. The authors proved some bounds for $in(G)$, also exploring this parameter on some types of grids. They showed that the problem to determine whether $in(G) \leq k$ is \NP-complete for split graphs or bounded-degree planar graphs and \W[2]-hard for bipartite graphs when the problem is parameterized by $k$. On the other hand, they showed that the problem is polynomial for outerplanar graphs, cobipartite graphs and interval graphs, and is $\FPT$ when parameterized by the treewidth or the neighborhood diversity of $G$ or by $q$ in $(q,q-4)$-graphs.

In~\cite{araujo2024}, Araujo \emph{et al.} examined the \emph{hull number} in the cycle convexity. The hull number of a graph $G$, denoted by $hn(G)$, is the size of a smallest hull set $S$ of $G$. They showed that $hn(G) \leq n/2$ when $G$ is a $4$-regular planar graph. Furthermore, they proved that the problem to determine whether $hn(G) \leq k$ is \NP-complete, even when $G$ is a planar graph, but it is polynomial when $G$ is a chordal graph, $P_4$-sparse, or a grid.




In this paper, we explore the convexity number and the percolation time of a graph in the cycle convexity. The \emph{convexity number}~\cite{Chartrand1999, Chartrand2002} of a graph~$G=(V,E)$, denoted by $con(G)$, is the cardinality of a largest convex set of $G$ excluding~$V$ itself. The \emph{percolation time}~\cite{benevides2015b,benevides2015} of a graph $G$, denoted by~$pn(G)$, is defined as the largest integer $k > 0$ such that there exists a hull set $S$ of $G$ where $I^{k-1}(S) \neq V$ if such an integer $k$ exists; otherwise, $pn(G) = 0$. In \Cref{sec:convexnumber}, we show that determining whether $con(G) \geq k$ is \NP-complete and \W[1]-hard when parameterized by $k$, even when $G$ is a thick spider, but it is polynomial when $G$ is an extended $P_4$-laden graph. In \Cref{sec:percolationtime}, we establish that determining whether $pn(G) \geq k$ is \NP-complete for any fixed~$k \geq 9$, but it is polynomial for any fixed $k \leq 2$, or when $G$ is a cactus. In \Cref{sec:finalremarks}, we present our conclusions and final considerations.

\section{Results on the Convexity Number}
\label{sec:convexnumber}

In this section, we present our results regarding the convexity number in the cycle convexity. We determine the convexity number of $G$, when $G$ is the union/join of two graphs, a pseudo-split graph, and a quasi-spider. With that, we are able to obtain the two main results of this section. We prove that the following problem is \NP-complete and \W[1]-hard when parameterized by~$k$ for thick spiders and polynomial for extended $P_4$-laden graphs.

\problemClassic{Convexity number}{A graph~$G=(V,E)$ and an integer~$k>0$.}{$con(G) \geq k$?}{prob:convexityNumber}

Denote by $\alpha(G)$ and $\delta(G)$ the size of a largest independent set and the minimum degree of $G$, respectively. Let the \emph{union} of two graphs $G_1 = (V_1,E_1)$ and $G_2 = (V_2,E_2)$ be $G_1 \cup G_2 = (V_1 \cup V_2, E_1 \cup E_2)$, and the \emph{join} of~$G_1$ and $G_2$ be $G_1 \lor G_2 = (V_1 \cup V_2, E_1 \cup E_2 \cup \{uv \mid u \in V_1, v \in V_2\})$. 
\begin{lemma}
\label{lemma:convexityNumberunionjoin}
Let $G_1 = (V_1,E_1)$ and $G_2 = (V_2,E_2)$ be two graphs and $c_i$ be the size of the smallest connected component of $G_i$, for $i \in \{1,2\}$. Then, $con(G_1 \cup G_2) = \max(|V_1| + con(G_2),con(G_1) + |V_2|)$ and
\begin{flalign*}
con(G_1 \lor G_2) = \begin{cases}
\max(\alpha(G_1),\alpha(G_2))&\text{if }|V_1|,|V_2| \geq 2\\
\max(\alpha(G_2),|V_2|-c_2+1)&\text{if }|V_1| =1\\
\max(\alpha(G_1),|V_1|-c_1+1)&\text{if }|V_2| =1.
\end{cases}
\end{flalign*}
\end{lemma}

\begin{proof}
Let $T_i$ be a largest convex set of $G_i$ different from $V_i$, for $i \in \{1,2\}$. Both sets $T_1 \cup V_2$ and $T_2 \cup V_1$ are convex sets of $G_1 \cup G_2$ and different from~$V_1 \cup V_2$. Thus, $con(G_1 \cup G_2) \geq \max(|V_1| + con(G_2),con(G_1) + |V_2|)$. Now, let $S$ be a set of vertices of size at least $1 + \max(|V_1| + con(G_2),con(G_1) + |V_2|)$, $S_1 = S \cap V_1$, and $S_2 = S \cap V_2$. Since $|S| > |V_1| + con(G_2)$ and $|S| > con(G_1) + |V_2|$, we have $|S_1| > con(G_1)$ and $|S_2| > con(G_2)$. Thus, $H(S) = H(S_1 \cup S_2) \supseteq H(S_1) \cup H(S_2) = V_1 \cup V_2 = V(G)$. Thus, we have that $S$ is either not convex or equal to $V(G)$. As a result, $con(G_1 \cup G_2) = \max(|V_1| + con(G_2),con(G_1) + |V_2|)$.

Let $G = G_1 \lor G_2$. Assume that $V_1 = \{v\}$ and let $C$ be the smallest connected component of $G_2$. A largest independent set of $G_2$ is also a convex set of $G$ and is different from $V(G)$ as it certainly does not include $v$. Furthermore, no vertex $u \in V(C)$ is in a cycle in $G[\{u\} \cup (V(G) \setminus V(C))]$, since $C$ is a connected component of $G - v$. Thus, the set $T = V(G) \setminus V(C) \neq V(G)$ is a convex set of $G$. Hence, we conclude that $con(G_1 \lor G_2) \geq \max(\alpha(G_2),|V_2|+1-c_2)$. Now, let $T$ be a set of vertices of $G$ such that $|T| \geq 1 + \max(\alpha(G_2),|V_2|+1-c_2)$. Since $|T| \geq 1 + \alpha(G_2)$, if $v \notin T$, $T$ is not an independent set of $G_2$ and thus $v \in I(T)$. Furthermore, since $|T| \geq 1 + (|V_2|+1-c_2)$ and $c_2$ is the size of $C$, which is the smallest connected component of $G_2$, $T$ has at least one vertex in each connected component of $G_2$. Therefore, since $v \in H(T)$, $H(T) = V(G)$, which implies that $T$ is not convex or equal to $V(G)$. We can then conclude that $con(G_1 \lor G_2) = \max(\alpha(G_2),|V_2|+1-c_2)$ if $|V_1| = 1$. The case where $|V_2| = 1$ is symmetrical to this case.

Now, assume that $|V_1|\geq 2$ and $|V_2| \geq 2$. A largest independent set of $G_1$ or $G_2$ is also an independent set of $G = G_1 \lor G_2$ and consequently a convex set different from $V(G)$. Therefore, $con(G_1 \lor G_2) \geq \max(\alpha(G_1),\alpha(G_2))$. 

Note that if $v \in V_1$ and $u,w \in V_2$, then $I(\{v,u,w\}) \supseteq V_1$, since any vertex in $V_1$ forms a $C_4$ with $v,u$ and $w$. This implies that $I^2(\{v,u,w\}) = V(G)$, since $|V_1| > 1$ and any vertex in $V_2 \setminus \{w\}$ forms a $C_4$ with $u$, $v$, and any other vertex in $V_1$. Similarly, this also happens if $v \in V_2$ and $u,w \in V_1$.

Let $T$ be a set of vertices of $G$ such that $|T| \geq 1 + \max(\alpha(G_1),\alpha(G_2))$. Let $T_i = T \cap V_i$ for $i \in \{1,2\}$. If $T_1$ is not an independent set of $G_1$, let $uw \in E_1$ such that $u,w \in T_1$. Then, $I(T_1) \supseteq I(\{u,w\}) \supseteq V_2$, which implies that $I^2(T_1) \supseteq V(G)$, since $u,w \in V_1$ and $|V_2| \geq 2$. It follows that $H(T) = V(G)$. So, $T$ is not convex or $T = V(G)$. With symmetrical arguments, we can reach the same conclusion when $T_2$ is not an independent set of $G_2$. 

Let $T_i$ be an independent set of $G_i$ for $i \in \{1,2\}$. Let $i \neq j \in \{1,2\}$. Since $|T| \geq 1 + \alpha(G_i)$, $|T_j| \geq 1$. Furthermore, if $T_i$ is not a largest independent set of $G_i$, then $|T_j| \geq 2$. This implies that $H(T) = V(G)$, since $|T_i| \geq 1$, and thus $T$ is not convex or $T = V(G)$. So, $T_1$ and $T_2$ are largest independent sets of $G_1$ and $G_2$ respectively. Also, if either $|T_1| \geq 2$ or $|T_2| \geq 2$, then either~$T$ is not convex or $T = V(G)$, since $|T_1|,|T_2| \geq 1$.

Therefore, assume that $T_1$ and $T_2$ are largest independent sets of $G_1$ and~$G_2$ respectively, and $|T_1|=|T_2|=1$, from which we can infer that $G_1$ and $G_2$ are complete graphs. Thus, $G = G_1 \lor G_2$ is a complete graph itself, which implies that $H(T) = V(G)$, since $|T|=|T_1|+|T_2|=2$ and $G$ is a complete graph. Hence, we have that $T$ is not convex or $T = V(G)$.

Therefore, we can conclude that $con(G_1 \lor G_2) = \max(\alpha(G_1),\alpha(G_2))$ if~$|V_1|,|V_2| \geq 2$. 
\end{proof}

A useful lemma about the convexity number of some graphs is the following:
\begin{lemma}
\label{lemma:conv1}
Let $G$ be a graph. We have that $con(G) = |V(G)|-1$ if and only if there is a vertex $v \in V(G)$ that is in no cycle of $G$.
\end{lemma}

\begin{proof}
If there is a vertex $v$ that is in no cycle of $G$, then $S = V(G) \setminus \{v\}$ is convex since $H(S) = S$. Therefore, $con(G) = |V(G)| - 1$.

On the other hand, assume that every vertex of $G$ is in at least one cycle of $G$. Let $v$ be an arbitrary vertex of $G$ and $S = V(G) \setminus \{v\}$. Since $v$ is in a cycle of $G$, $H(S) = V(G)$. Therefore, we conclude that there is no convex set with size $|V(G)| - 1$, which implies $con(G) \neq |V(G)| - 1$.
\end{proof}

A \emph{pseudo-split} graph is a graph $G = (V,E)$ whose vertex set can be partitioned into three subsets $S,C$ and $R$ such that (i) $|C|, |S| \geq 2$, $S$ is an independent set, and $C$ is a clique; (ii) there are no edges with one endpoint in $S$ and the other in $R$ and there is an edge between each vertex in $C$ and each vertex in $R$; and (iii) every vertex in $C$ has at least one neighbor in $S$ and every vertex in $S$ has at least one non-neighbor in $C$. Note that $R$ can be empty. We call a pseudo-split graph with vertex-set partition into sets~$S,C$ and $R$ an \emph{$(S,C,R)$-pseudo-split graph}.

\begin{figure}[htb]
    \centering
    \begin{tikzpicture}[scale=1]
        \tikzset{vertex/.style={draw, circle, minimum size=7pt,inner sep=1pt}};
        \node[vertex] (s1) at (-4,-2.5) {};
        \node[vertex,minimum size=11pt] (s21) at (-0.5,-2.5) {$v$};
        \node[vertex,minimum size=11pt] (s22) at (0.5,-2.5) {$u$};
        \node[vertex] (s3) at (4,-2.5) {};

        \node[vertex] (c1) at (-2,0) {};
        \node[vertex] (c2) at (0,0) {};
        \node[vertex] (c3) at (2,0) {};

        \node[vertex] (r1) at (-3,2.5) {};
        \node[vertex] (r2) at (-1,2.5) {};
        \node[vertex] (r3) at (1,2.5) {};
        \node[vertex] (r4) at (3,2.5) {};

        \path[-] (s1) edge (c2) edge (c3);
        \path[-] (s21) edge (c1) edge (c3);
        \path[-] (s22) edge (c1) edge (c3) edge (s21);
        \path[-] (s3) edge (c1) edge (c2);
        \draw[-] (c1) to (c2) to (c3) to[bend left=15] (c1);

        \draw[-] (r1) to (r2);
        \draw[-] (r3) to (r4);
        \path[-] (r1) edge (c1) edge (c2) edge (c3);
        \path[-] (r2) edge (c1) edge (c2) edge (c3);
        \path[-] (r3) edge (c1) edge (c2) edge (c3);
        \path[-] (r4) edge (c1) edge (c2) edge (c3);

        \draw[-,blue] (-4.5,-3) rectangle (4.5,-2);
        \node () at (5,-2.5) {\Large{$S$}};
        \draw[-,red] (-2.5,0.5) rectangle (2.5,-0.5);
        \node () at (3,0) {\Large{$C$}};
        \draw[-,green] (-3.5,3) rectangle (3.5,2);
        \node () at (4,2.5) {\Large{$R$}};
    \end{tikzpicture}
    \caption{\label{fig:quasispider}Example of a quasi-spider obtained from an $(S,C,R)$-thick spider by replacing a vertex from $S$ by a $K_2$ with the vertices $v$ and $u$.}
\end{figure}
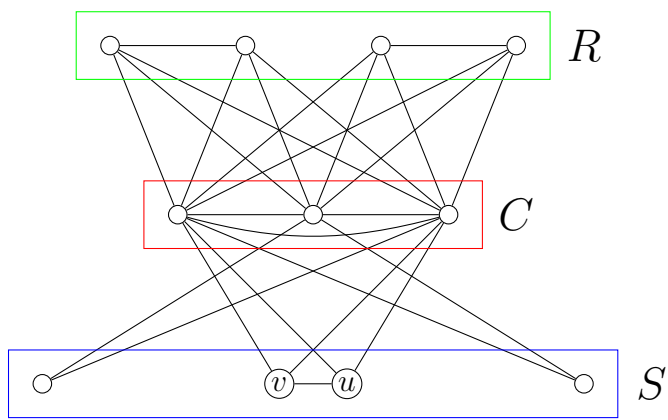
A \emph{spider} \cite{hoang1983} is a pseudo-split graph such that $|S|=|C|$, and there is a bijection $f : C \rightarrow S$ such that for every $v \in C$, either $N(v) = \{f(v)\}$, in which case $G$ is a \emph{thin spider}; or $N(v) = S \setminus \{f(v)\}$, in which case $G$ is a \emph{thick spider}. We refer to a (thick/thin) spider with vertex-set partitioned into sets $S,C$ and $R$ as an $(S,C,R)$-(thick/thin) spider.

A \emph{quasi-spider} is obtained from an $(S,C,R)$-spider by replacing one vertex in $S \cup C$ by a $K_2$ or $\overline{K_2}$, such that the new vertices have the same neighborhood as the original one, except for the mutual adjacency when obtaining a~$K_2$. \Cref{fig:quasispider} illustrates a quasi-spider obtained from an $(S,C,R)$-thick spider by replacing a vertex from $S$ by a $K_2$. 

Note that a quasi-spider obtained from an $(S,C,R)$-spider by either replacing a vertex from $S$ by a $\overline{K_2}$ or replacing a vertex from $C$ by a $K_2$ is not a spider, but it is a pseudo-split graph. Thus, the next lemma actually includes all quasi-spiders, since $G$ is either a pseudo-split graph or a quasi-spider obtained from an $(S,C,R)$-spider by either replacing a vertex from $S$ by a $K_2$ or replacing a vertex from $C$ by a $\overline{K_2}$.

\begin{lemma}
\label{lemma:convexityNumberpseudosplit}
Let either $G$ be an $(S,C,R)$-pseudo-split graph or $G'$ be an $(S,C,R)$-spider and $G$ be a quasi-spider obtained from $G'$ by either replacing~$v \in S$ by a $K_2$, or replacing $v \in C$ by a $\overline{K_2}$. Then
\begin{flalign*}
con(G) = \begin{cases}
|V(G)|-1&\text{if }\delta(G)=1\\
\alpha(G[R]) + |S|&\text{otherwise.}
\end{cases}
\end{flalign*}
\end{lemma}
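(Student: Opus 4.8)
The plan is to treat the two branches of the formula separately, disposing of the degenerate case first. If $\delta(G)=1$, pick a vertex $w$ of degree one; since $w$ lies on no cycle of $G$, we have $w\notin I(V(G)\setminus\{w\})$, so $V(G)\setminus\{w\}$ is a proper convex set of the largest conceivable size $|V(G)|-1$, giving $con(G)=|V(G)|-1$. For the rest I assume $\delta(G)\ge 2$ and prove the two inequalities $con(G)\ge \alpha(G[R])+|S|$ and $con(G)\le \alpha(G[R])+|S|$; throughout, in the quasi-spider case I keep $S,C,R$ as the partition of the underlying spider $G'$ (so $|S|$ and $G[R]=G'[R]$ are unaffected by the substitution, which is the reading that makes the formula correct).

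For the lower bound I exhibit an explicit convex set of size $\alpha(G[R])+|S|$. Let $A$ be a maximum independent set of $G[R]$. In the pseudo-split case and in the case where a vertex of $C$ is expanded to a $\overline{K_2}$, take $T=S\cup A$; when a vertex $v\in S$ is expanded to a $K_2=\{v_1,v_2\}$, take $T=(S\setminus\{v\})\cup\{v_1\}\cup A$. In every case $G[T]$ is edgeless, because $S$ is independent, $A$ is independent, there are no $S$--$R$ edges, and the single retained twin has its mate outside $T$. Hence every component of $G[T]$ is a single vertex, so no $u$ can close a cycle using a path inside $T$, whence $I(T)=T$. Since $T$ misses the (nonempty) clique part, $T\neq V(G)$, and $|T|=|S|+\alpha(G[R])$, which proves $con(G)\ge\alpha(G[R])+|S|$.

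The substance of the lemma is the matching upper bound, where the key mechanism is that any edge $xy$ of $G[T]$ forces every common neighbour of $x$ and $y$ into $I(T)$ via the triangle $xyz$. I first show that a proper convex $T$ cannot contain two adjacent vertices of the clique part $C$: if $c_1,c_2\in T\cap C$ are adjacent, then all of $R$ (adjacent to both) and then every remaining clique vertex enter the hull, after which each vertex of $S$---having at least two neighbours in the now fully present clique because $\delta(G)\ge 2$---is absorbed, forcing $H(T)=V(G)$. Consequently $T\cap C$ contains at most one vertex (false twins of a $\overline{K_2}$-expansion being treated separately below). The same triangle argument shows that $T\cap R$ must be independent (an $R$-edge would pull in all of $C$) and that $T$ cannot contain both ends of the $K_2$-twin edge (their two ends share at least two common neighbours in $C$). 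Thus when $T\cap C=\varnothing$ we obtain $|T|\le |S|+\alpha(G[R])$ directly.

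The remaining and most delicate case is $T\cap C=\{c^\ast\}$. A single $R$-vertex $r\in T$ would be adjacent to $c^\ast$ and pull every other clique vertex (adjacent to both $c^\ast$ and $r$) into the hull, so $T\cap R=\varnothing$ and hence $|T|\le |S|+1$; if $R\ne\varnothing$ this already yields $|T|\le|S|+\alpha(G[R])$. The genuinely hard point is $R=\varnothing$, where I must exclude the possibility that $T$ consists of the entire independent part together with $c^\ast$. Here I invoke $\delta(G)\ge2$: I choose a neighbour $s^\ast$ of $c^\ast$ inside the independent part (a non-twin), note that $s^\ast$ has a second clique-neighbour $c_1\ne c^\ast$ since its degree equals its number of clique-neighbours, and observe that $c_1$ is adjacent to both $c^\ast$ and $s^\ast$, which lie in one component of $G[T]$; thus $c_1\in I(T)$ and we are back to the two-clique-vertex situation that forces $H(T)=V(G)$. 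The main obstacle is performing this pull-in uniformly over the three families: one must verify that the required neighbour $s^\ast$ and its second clique-neighbour exist (which uses $|C|\ge3$, itself forced by $\delta(G)\ge2$ in a thick spider), and that the false-twin pair of a $\overline{K_2}$-expansion---being a non-edge rather than a clique edge---is absorbed by separately splitting on whether $T$ contains any common neighbour of the two twins.
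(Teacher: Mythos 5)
Your proof is correct and follows essentially the same approach as the paper's: the lower bound via the independent set $S\cup A$ (independent sets are automatically convex in the cycle convexity), and the upper bound via the mechanism that any edge inside a candidate convex set pulls a common neighbour in the clique part into the hull, which then percolates to all of $V(G)$. The paper organizes the upper bound a bit more uniformly --- it first shows $H(\{u,w\})=V(G)$ for \emph{every} edge $uw$, so proper convex sets must be independent, and then shows any set of size $\alpha(G[R])+|S|+1$ fails to be independent (your deferred $\overline{K_2}$ false-twin sub-case is resolved there by thick-spider adjacency, matching the split you sketch) --- but the substance of the argument is the same as yours.
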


\begin{proof}
If there is a vertex $v$ of $G$ such that $|N(v)| = 1$, then $v$ is in no cycle of $G$, which, by \Cref{lemma:conv1}, implies $con(G) = |V(G)|-1$. 

Henceforth, assume that $\delta(G) \geq 2$. If $G$ is a quasi-spider obtained from an $(S,C,R)$-spider $G'$, let $V(G) = C \cup R \cup S \cup \{v'\}$, where either $v \in S$ was replaced by a $K_2$ formed by the vertices $v$ and $v'$, or $v \in C$ was replaced by a $\overline{K_2}$ formed by the vertices $v$ and $v'$.

Let $Q$ be a largest independent set of $G[R]$. Since $T = Q \cup S$ is an independent set of $G$, whether $G$ is a pseudo-split graph or a quasi-spider graph, then $T$ is a convex set of $G$ different from $V(G)$. Hence, $con(G) \geq \alpha(G[R]) + |S|$.

Now, let $T \subseteq V(G)$ such that $|T| \geq \alpha(G[R]) + |S| + 1$ and denote $T_S = T \cap S, T_C = T \cap C$ and $T_R = T \cap R$. First, let $G$ be an $(S,C,R)$-pseudo-split graph. If $|T_R| > \alpha(G[R])$, $T$ is not an independent set of $G$. On the other hand, if $|T_R| \leq \alpha(G[R])$, either $T_C \geq 2$, or if $T_C = 1$, $T_S = S$, which also implies that $T$ is not an independent set of $G$.

For any two neighbors $v,w \in C \cup R$, $I(\{v,w\}) \supseteq C$, which implies that $I^2(\{v,w\}) = V(G)$, since every vertex in $S$ and~$R$ has at least two neighbors in~$C$ (remember that $\delta(G) \geq 2$). Furthermore, for similar reasons, for any two neighbors $v \in C$ and $w \in S$, $I^2(\{v,w\}) \supseteq C$, which implies that $I^3(\{v,w\}) = V(G)$. With that, we can infer that no convex set of $G$ different from $V(G)$ can contain two neighbors. Since $T$ is not an independent set, either $T$ is not convex or $T = V(G)$. We then conclude that $con(G) = \alpha(G[R]) + |S|$ if $G$ is an $(S,C,R)$-pseudo-split graph.

Now, let $G$ be a quasi-spider obtained from an $(S,C,R)$-spider $G'$ by replacing either $v \in S$ by a $K_2$, or $v \in C$ by a $\overline{K_2}$. Since $\delta(G) \geq 2$ and~$|S| \geq 2$, $G'$ is not a thin spider, which implies that $|S|=|C| \geq 3$. If $v \in S$, we can use the same reasoning we used in the case where $G$ is a pseudo-split graph to prove that $T$ is not an independent set, and for any two neighbors $u$ and~$w$, $H(\{u,w\}) = V(G)$. This allows us to reach the same conclusion that either~$T$ is not a convex set of $G$, or $T = V(G)$. The additional case that must be observed in order to conclude that $H(\{u,w\}) = V(G)$ is when $v \in S$, $u=v$ and $w = v'$. However, in this case, since $|C| \geq 3$ and $G'$ is a thick spider, $I^3(\{u,w\}) = V(G)$.

Finally, let $v \in C$ and denote $C' = C \cup \{v'\}$ and $T_{C'} = T \cap C'$. For any two neighbors $u$ and $w$, $H(\{u,w\}) = V(G)$, since $I(\{u,w\}) \supseteq C' \setminus \{v,v'\}$, and $I^2(\{u,w\}) \supseteq C'$. Thus, no convex set of $G$ different from $V(G)$ can contain two neighbors. Suppose by contradiction that $T$ is an independent set. Since $|T| \geq \alpha(G[R]) + |S| + 1$ and $T$ is an independent set, $1 \leq |T_{C'}| \leq 2$. If $|T_{C'}| = 2$, $T_{C'} = \{v,v'\}$ and $|T_S| \geq |S| - 1 \geq 2$. Thus, since $G'$ is a thick spider, there is at least one vertex in $T_S$ that is adjacent to $v$ and $v'$, a contradiction. If $|T_{C'}| = 1$, $|T_S| = |S|$, which again implies that there is some vertex in $T_S$ that is adjacent to the vertex in $T_{C'}$, a contradiction. Hence, we have that $T$ is not an independent set, which implies that either $T$ is not convex or $T = V(G)$.

Therefore, we conclude that $con(G) = \alpha(G[R]) + |S|$ if $G$ is a quasi-spider obtained from an $(S,C,R)$-spider by replacing either $v \in S$ by a $K_2$, or $v \in C$ by a $\overline{K_2}$.
\end{proof}

The \textsc{Independent Set} Problem is a well-known \NP-complete and \W[1]-complete problem when parameterized by $k$ \cite{cygan2015} that consists in, given a graph~$G$ and an integer $k$, determining whether $\alpha(G) \geq k$. The following complexity result follows from \Cref{lemma:convexityNumberpseudosplit}.
\begin{theorem}
\label{cor:convexityNumber}
\nameref{prob:convexityNumber} is \NP-complete and \W[1]-hard when parameterized by~$k$ for thick spiders.
\end{theorem}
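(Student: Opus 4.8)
The plan is to establish \NP-completeness and \W[1]-hardness simultaneously by exhibiting a parameterized reduction from \textsc{Independent Set}, leveraging the closed-form expression for $con(G)$ on thick spiders furnished by \Cref{lemma:convexityNumberpseudosplit}. Membership in \NP is routine: given a candidate set $T$, one can verify in polynomial time that $T \neq V(G)$ and that $I(T) = T$ (equivalently, that no vertex outside $T$ lies on a cycle in $G[T \cup \{u\}]$), so I would dispose of this in a single sentence and focus the argument on hardness.

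For the reduction, given an instance $(H, k)$ of \textsc{Independent Set} with $H = (V_H, E_H)$, I would build a thick spider $G$ whose $R$-part induces a copy of $H$. Concretely, take $S$ and $C$ to be new vertex sets of some common size $m$ together with the bijection $f$ realizing the thick-spider adjacency $N(v) = S \setminus \{f(v)\}$ for each $v \in C$, make $C$ a clique and $S$ independent, join every vertex of $C$ to every vertex of $R := V_H$, and let $G[R]$ be exactly $H$. I must choose $m \geq 2$ large enough (e.g. $m = \max(2, \text{something})$) and ensure $\delta(G) \geq 2$ so that the second case of \Cref{lemma:convexityNumberpseudosplit} applies, giving $con(G) = \alpha(G[R]) + |S| = \alpha(H) + m$. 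The output instance is then $(G, k + m)$, and correctness is immediate: $con(G) \geq k + m$ iff $\alpha(H) \geq k$. Since the construction is polynomial and the new parameter $k + m$ depends only on $k$ (with $m$ chosen as a fixed small constant, say $m = 2$), this is a valid parameterized reduction, transferring both \NP-hardness and \W[1]-hardness from \textsc{Independent Set}.

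The main obstacle I anticipate is verifying the structural hypotheses of \Cref{lemma:convexityNumberpseudosplit} for the constructed $G$, namely that it is genuinely a thick spider with $\delta(G) \geq 2$. The pseudo-split conditions must hold: $|C|, |S| \geq 2$; every vertex in $C$ has a neighbor in $S$ (automatic since $N(v) = S \setminus \{f(v)\}$ is nonempty once $m \geq 2$); and every vertex in $S$ has a non-neighbor in $C$ (namely its preimage under $f$). I would also need $G[R] = H$ to carry the independence-number information faithfully, which is why placing $H$ in the $R$-part is the natural choice. A minor subtlety is the degree condition: a vertex in $R$ could have degree as low as its $H$-degree plus $m$, and a vertex in $S$ has degree $m-1$, so taking $m \geq 3$ cleanly forces $\delta(G) \geq 2$ and, as noted in the lemma's proof, guarantees $G'$ is not a thin spider. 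With these checks in place the equality $con(G) = \alpha(H) + m$ follows directly from the lemma, completing the argument.
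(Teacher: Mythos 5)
Your proposal is correct and follows essentially the same route as the paper: a parameterized reduction from \textsc{Independent Set} that builds a thick spider whose $R$-part induces $H$ and invokes \Cref{lemma:convexityNumberpseudosplit} to get $con(G)=\alpha(H)+|S|$, exactly as the paper does with the fixed choice $|S|=|C|=3$. Note only that your parenthetical suggestion $m=2$ would fail (every vertex of $S$ would then have degree $1$, putting $G$ in the $\delta(G)=1$ case of the lemma, and the spider would not be genuinely thick), but your later observation that $m\geq 3$ is needed corrects this and matches the paper.
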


\begin{proof}
This theorem follows directly from the fact that, given a graph $H$ and an integer $k$, $\alpha(H) = k$ if and only if $con(G) = \alpha(H) + |S| = k+3$, where $G$ is an $(S,C,R)$-thick spider such that $|S|=|C|=3$ and $G[R]$ is isomorphic to $H$. This equivalence, in turn, follows directly from \Cref{lemma:convexityNumberpseudosplit}.
\end{proof}

\begin{figure}[htb]
    \centering
    \begin{tikzpicture}[scale=1.2]
        \tikzset{class/.style={draw, rectangle, inner sep=3pt, font=\footnotesize}};
        \node[class] (exladen) at     (0,0){Extended $P_4$-laden};
        \node[class] (tidy) at        (-1.5,-1){$P_4$-tidy};
        \node[class] (laden) at       (1.5,-1){$P_4$-laden};        
        \node[class] (extend) at      (-4,-2){$P_4$-extendible};
        \node[class] (exsparse) at    (0,-2){Extended $P_4$-sparse};
        \node[class] (lite) at        (4,-2){$P_4$-lite};
        \node[class] (exreducible) at (-1.5,-3){Extended $P_4$-reducible};
        \node[class] (sparse) at      (1.5,-3){$P_4$-sparse};
        \node[class] (reducible) at   (0,-4){$P_4$-reducible};
        \node[class] (cograph) at     (0,-5){Cograph};

        \path[-] (exladen) edge (tidy) edge (laden);
        \path[-] (extend) edge (exreducible) edge (tidy);
        \path[-] (exsparse) edge (exreducible) edge (sparse) edge (tidy);
        \path[-] (lite) edge (sparse) edge (tidy) edge (laden);
        \path[-] (reducible) edge (exreducible) edge (sparse) edge (cograph);
    \end{tikzpicture}
    \caption{\label{fig:hierarquiaPoucosP4} Hierarchy of graphs with few $P_4$'s.}
\end{figure}
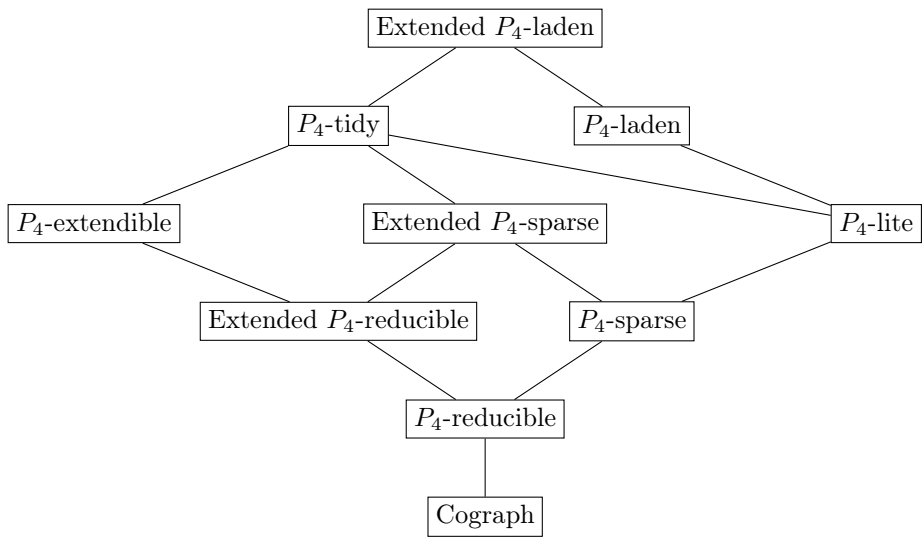

Created by Giakoumakis~\cite{giakoumakis1996}, extended $P_4$-laden graphs bear a close relationship to pseudo-split graphs. A graph is called \emph{extended $P_4$-laden} if every induced subgraph with at most six vertices that contains more than two induced $P_4$'s is a pseudo-split graph. The motivation to derive algorithms for extended $P_4$-laden graphs arises from the fact that it resides atop a widely studied hierarchy of classes containing graphs with few $P_4$’s, illustrated in \Cref{fig:hierarquiaPoucosP4}, including cographs, $P_4$-sparse, $P_4$-lite, $P_4$-laden and $P_4$-tidy graphs. In fact, the following theorem provides a natural way to derive efficient algorithms for this class by showing a convenient way to decompose an extended $P_4$-laden graph, which can be obtained in linear time.

\begin{theorem}[Giakoumakis \cite{giakoumakis1996}]
\label{thm:giakoumakis}
A graph $G$ is extended $P_4$-laden if and only if exactly one of the following conditions is satisfied:
\begin{itemize}
    \item $G$ is the union or join of two extended $P_4$-laden graphs;
    \item $G$ is an $(S,C,R)$-pseudo-split graph or a quasi-spider obtained from an $(S,C,R)$-spider, and, in both cases, $G[R]$ is an extended $P_4$-laden;
    \item $G$ is isomorphic to $C_5, P_5$ or $\overline{P_5}$;
    \item $G$ has at most one vertex.
\end{itemize}
\end{theorem}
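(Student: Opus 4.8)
The plan is to prove both directions by induction on $n = |V(G)|$, with modular decomposition as the backbone. I would first record two structural facts. The class of extended $P_4$-laden graphs is hereditary: its defining condition ranges only over induced subgraphs on at most six vertices and is therefore inherited by every induced subgraph. Moreover, for $n \ge 2$ a graph and its complement cannot both be disconnected; combined with the observation that a pseudo-split/spider skeleton and each of $C_5, P_5, \overline{P_5}$ is prime (neither a union nor a join), this will yield the ``exactly one'' clause once the prime case is settled.

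For the ``if'' direction I would verify closure under each constructor, exploiting that both the property of carrying more than two induced $P_4$'s and the property of being pseudo-split are self-complementary (since $P_4 = \overline{P_4}$ and $\overline{C_4} = 2K_2$), so that closure under join reduces to closure under union by passing to complements. The base cases $|V(G)| \le 1$ and $G \in \{C_5, P_5, \overline{P_5}\}$ are checked by hand: each exceptional graph has exactly five vertices, so the only induced subgraph that could carry more than two induced $P_4$'s is the graph itself, and the defining implication is verified directly. For a union $G = G_1 \cup G_2$, every induced $P_4$ is connected and hence lies within a single side; since any four vertices span at most one induced $P_4$, a six-vertex induced subgraph with more than two induced $P_4$'s must place at least five of its vertices on one side, so it is a pseudo-split induced subgraph of that side together with at most one vertex isolated from it, which is again pseudo-split. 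For the pseudo-split/quasi-spider constructor with $G[R]$ extended $P_4$-laden, the key point is that $R$ is a module, complete to $C$ and anticomplete to $S$; since any induced $P_4$ meeting $R$ in two or more vertices is entirely contained in $R$, the six-vertex witness condition for $G$ decomposes into the corresponding conditions on $G[R]$ and on the spider/pseudo-split skeleton, both of which hold by induction.

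For the ``only if'' direction I would split on the connectivity of $G$ and $\overline{G}$. If $G$ is disconnected, then $G = G_1 \cup G_2$, and if $\overline{G}$ is disconnected, then $G = G_1 \lor G_2$; in either case the factors are induced subgraphs of $G$, hence extended $P_4$-laden by hereditariness, and we land in case~1. The decisive situation is when $G$ and $\overline{G}$ are both connected, which forces $n \ge 4$. Here I would contract the maximal strong modules of $G$ to obtain a prime quotient $H$ into which the modules are substituted, and aim to show that $H$ is either isomorphic to $C_5, P_5$ or $\overline{P_5}$ or is a spider/pseudo-split skeleton; in the latter case the substituted modules can be nontrivial only at the $R$-part, which recurses and forces $G[R]$ to be extended $P_4$-laden, and at a single spider-leg, producing the $K_2$ or $\overline{K_2}$ replacement of a quasi-spider.

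The main obstacle is exactly this prime-graph analysis, which carries the real content of the theorem. The strategy is contrapositive and finite: I would assemble the list of all graphs on at most six vertices that carry more than two induced $P_4$'s yet are not pseudo-split, and then show that every prime graph on at least five vertices that is neither a spider/pseudo-split skeleton nor isomorphic to $C_5, P_5, \overline{P_5}$ must contain one of these configurations as an induced subgraph. The extended $P_4$-laden hypothesis forbids all of them, pinning $H$ to the claimed list. This verification is most naturally organized through the theory of $p$-connected graphs and their separable/non-separable dichotomy --- a separable $p$-connected graph carries the spider-like split structure, while the non-separable prime graphs that survive the six-vertex condition collapse to $C_5, P_5, \overline{P_5}$ --- but however it is phrased, this intricate (if finite) case check is the part I expect to be hardest.
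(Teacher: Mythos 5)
The paper offers no proof of this theorem at all: it is quoted as a known result of Giakoumakis \cite{giakoumakis1996}, so there is no in-paper argument to compare against, and your proposal has to be judged against what a complete proof would require. Your outline does follow the route of the original literature (hereditariness, complementation to trade join for union, modular decomposition, and a prime/$p$-connected case analysis in the style of Jamison--Olariu), but it defers exactly the step that constitutes the theorem. The hereditariness observation, the self-complementarity reduction, and the union case are routine; the substance lies in the prime analysis, and there you only describe a plan: assemble the list of all $\leq 6$-vertex non-pseudo-split graphs with more than two induced $P_4$'s, then show every prime graph outside the target list contains one of them. No such list is assembled, no containment argument is given, and the key structural claim of the ``only if'' direction --- that nontrivial modules can occur only at $R$ (recursing into an extended $P_4$-laden $G[R]$) or at a single vertex of $S \cup C$ (yielding the $K_2$/$\overline{K_2}$ of a quasi-spider), and never on a vertex of a $C_5$, $P_5$, or $\overline{P_5}$ quotient --- is asserted rather than proven. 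Since you yourself flag this as ``the part I expect to be hardest,'' what you have is a proof strategy, not a proof.

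There is also a concrete false step: your mutual-exclusivity (``exactly one'') argument rests on the claim that an $(S,C,R)$-pseudo-split graph is prime, i.e.\ neither a union nor a join. Under the paper's definition this fails. Conditions (i)--(iii) do not forbid a vertex $c \in C$ from being complete to $S$, in which case $c$ is universal and $G = (G-c) \lor K_1$ is a join of two extended $P_4$-laden graphs: take $S=\{s_1,s_2\}$, $C=\{c_1,c_2,c_3\}$, $R=\varnothing$, with $c_1$ complete to $S$ and $c_2s_1, c_3s_2$ the only other $C$--$S$ edges; this graph is pseudo-split by the definition, yet it is also $P_4 \lor K_1$. Similarly, a vertex of $S$ is allowed to have no neighbor in $C$, making a pseudo-split graph disconnected. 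So exclusivity cannot be derived from primality of the second and third cases; it needs the finer $p$-connected formulation used by Giakoumakis. Finally, in the ``if'' direction, your pseudo-split/quasi-spider case ends with ``decomposes into the corresponding conditions,'' but a six-vertex induced subgraph can mix vertices of $R$, $C$, and $S$; showing that such a mixed subgraph with more than two induced $P_4$'s is itself pseudo-split is an argument you still owe, even granting your (correct) observation that any induced $P_4$ meeting $R$ in two or more vertices lies entirely inside $R$.
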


\begin{theorem}
Let $G$ be an extended $P_4$-laden. Then, $con(G)$ can be computed in polynomial time.
\end{theorem}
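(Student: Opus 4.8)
The plan is to prove this by structural induction on the recursive decomposition of extended $P_4$-laden graphs provided by \Cref{thm:giakoumakis}. The key observation is that every operation in the decomposition—union, join, pseudo-split/quasi-spider construction over a subgraph $G[R]$, the small exceptional graphs $C_5,P_5,\overline{P_5}$, and the trivial graphs—is handled by a formula established earlier in this section, provided the relevant quantities for the child subgraphs are already known. Since Giakoumakis's theorem guarantees that this decomposition can be computed in linear time, the algorithmic content reduces to evaluating these formulas bottom-up along the decomposition tree.

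The main step is to verify, case by case, that the recursive formulas compute $con(G)$ from quantities computable in polynomial time. For the union and join cases, \Cref{lemma:convexityNumberunionjoin} expresses $con(G_1 \cup G_2)$ and $con(G_1 \lor G_2)$ in terms of $con(G_i)$, $|V_i|$, $\alpha(G_i)$, and the smallest-component sizes $c_i$. The values $con(G_i)$ are available by induction; $|V_i|$ and $c_i$ are trivially computable; and $\alpha(G_i)$ is the point requiring attention, since independence number is \NP-hard in general but polynomial on this class. For the pseudo-split and quasi-spider cases, \Cref{lemma:convexityNumberpseudosplit} gives $con(G) = |V(G)|-1$ when $\delta(G)=1$ and $con(G)=\alpha(G[R])+|S|$ otherwise, so here the needed quantity is $\alpha(G[R])$ on the extended $P_4$-laden subgraph $G[R]$. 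The exceptional graphs $C_5, P_5, \overline{P_5}$ and the at-most-one-vertex case are finite and can be resolved by direct inspection.

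The hard part will be justifying that the independence number $\alpha$ can itself be computed in polynomial time on extended $P_4$-laden graphs, and propagating it through the recursion in tandem with $con$. My approach is to maintain, at each node of the decomposition tree, not only $con$ but also $\alpha$ (and the auxiliary data $|V|$, $c$). I would establish the recurrences $\alpha(G_1 \cup G_2) = \alpha(G_1)+\alpha(G_2)$ and $\alpha(G_1 \lor G_2) = \max(\alpha(G_1),\alpha(G_2))$ for the decomposable cases, compute $\alpha$ directly for the pseudo-split and quasi-spider cases from $\alpha(G[R])$ and the structure of $S$ and $C$ (a largest independent set being $S$ together with a largest independent set of $G[R]$), and read off $\alpha$ for the finite exceptional graphs. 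Because extended $P_4$-laden graphs have bounded clique-width—or more directly because this very decomposition recursively exposes their structure—these recurrences cover every case and terminate at the leaves.

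Putting these together, I would argue that a single post-order traversal of the linear-size decomposition tree, spending polynomial time per node to evaluate the appropriate formula from \Cref{lemma:convexityNumberunionjoin} or \Cref{lemma:convexityNumberpseudosplit} using the already-computed values at its children, yields $con(G)$ at the root in total polynomial time. The correctness follows from \Cref{thm:giakoumakis} ensuring the decomposition is exhaustive and from the two lemmas ensuring each formula is correct; the efficiency follows from the linear-time construction of the decomposition and the constant or polynomial work per node.
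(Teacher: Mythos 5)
Your overall skeleton---decompose via \Cref{thm:giakoumakis} and evaluate the formulas of \Cref{lemma:convexityNumberunionjoin,lemma:convexityNumberpseudosplit} bottom-up over the decomposition tree---is the same as the paper's. The genuine divergence is how $\alpha$ is obtained. The paper never derives recurrences for $\alpha$: it observes that recursion on $con$ is only needed at union nodes (handled by summing over connected components), and for the join and pseudo-split/quasi-spider cases it simply invokes the known fact that the independence number of an extended $P_4$-laden graph is computable in polynomial time, citing \cite{giakoumakis1996,chvatal1987}. Your plan to propagate $\alpha$ through the same decomposition is a legitimate, more self-contained alternative, but it makes the correctness of your $\alpha$-recurrences load-bearing, and one of them is wrong.

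Concretely, your claim that in the quasi-spider case ``a largest independent set is $S$ together with a largest independent set of $G[R]$'' fails when a vertex $v \in C$ is replaced by a $\overline{K_2}=\{v,v'\}$: then $v$ and $v'$ can both be taken, together with all of $S \setminus N(v)$, and this can beat $|S| + \alpha(G[R])$. Smallest counterexample: the thin spider $P_4 = s_1c_1c_2s_2$ (so $S=\{s_1,s_2\}$, $C=\{c_1,c_2\}$, $R=\varnothing$) with $c_1$ replaced by a $\overline{K_2}$, yielding a $C_4$ with a pendant vertex; here $\{c_1,c_1',s_2\}$ is independent, so $\alpha = 3$, while your formula gives $|S|+\alpha(G[R]) = 2$. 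This is not harmless, because quasi-spider nodes feed their $\alpha$ into parent join nodes (where $con = \max(\alpha(G_1),\alpha(G_2))$ by \Cref{lemma:convexityNumberunionjoin}) and into parent pseudo-split nodes (via $\alpha(G[R])$); e.g., on the join of two copies of this graph your algorithm returns $2$ instead of the correct $3$. The fix is easy---for this replacement type use $\alpha(G) = \max\bigl(|S| + \alpha(G[R]),\, 2 + |S \setminus N(v)|\bigr)$---or simply cite polynomial-time computability of $\alpha$ on this class, as the paper does. A secondary issue: your parenthetical justification that extended $P_4$-laden graphs ``have bounded clique-width'' is false; by \Cref{thm:giakoumakis} every $(S,C,\varnothing)$-pseudo-split graph is extended $P_4$-laden, and these split-like graphs are known to have unbounded clique-width. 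Fortunately that claim is not needed: your alternative justification, that the decomposition of \Cref{thm:giakoumakis} is exhaustive, is the right one.
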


\begin{proof}
By decomposing $G$ according to \Cref{thm:giakoumakis} in linear time, our algorithm could compute $con(G)$ recursively. However, the only case where it would need to recursively compute the convexity number of any of its extended $P_4$-laden subgraphs is the case where $G = G_1 \cup G_2$. Therefore, our algorithm simply calculates $con(C)$ for each $C \in \mathcal{C}(G)$, where $\mathcal{C}(G)$ is the set of all connected components of $G$, in the following manner:
\begin{itemize}
    \item if $C$ has at most one vertex, $con(C) = 0$;
    \item if $C$ is a $C_5$ or $\overline{P_5}$, $con(C) = 3$;
    \item if $C$ is a $P_5$, $con(C) = 4$;
    \item If $C$ is a join of two extended $P_4$-laden graphs, our algorithm computes $con(C)$ according to \Cref{lemma:convexityNumberunionjoin};
    \item If $C$ is either a pseudo-split graph or a quasi-spider, our algorithm computes $con(C)$ according to \Cref{lemma:convexityNumberpseudosplit}.
\end{itemize}
Then, our algorithm computes $con(G) = \max\limits_{C \in \mathcal{C}(G)} |V(G)| - |V(C)| + con(C)$ according to \Cref{lemma:convexityNumberunionjoin} applied to any number of unions.

Note that, depending on the case, in order to compute the convexity number of each connected component, our algorithm simply has to compute the size of given subgraphs, which can be done in linear time, and/or the size of a largest independent set of a given extended $P_4$-laden subgraph, which can be done in polynomial time \cite{giakoumakis1996,chvatal1987}.
\end{proof}

\section{Results on the Percolation Time}
\label{sec:percolationtime}
In this section, we present our results regarding the percolation time in the cycle convexity. We prove that the following problem is \NP-complete for any fixed~$k \geq 9$, but polynomial for fixed~$k \leq 2$ or when $G$ is a \emph{cactus}, that is, a connected graph in which each edge belongs to at most one unique cycle. Equivalently, a cactus is a connected graph where each two cycles share at most one vertex.

\problemClassic{Percolation Time}{A graph $G=(V,E)$ and an integer $k\geq0$.}{$pn(G) \geq k$?}{prob:percolationNumber}

All graphs in this section are connected, as the percolation time of a graph is the maximum percolation time of its connected components.
 
\subsection{Percolation Time is linear on cacti}

Let us begin by presenting a linear-time algorithm that computes $pn(G)$ when $G$ is a cactus. We can assume that $G$ has at least one cycle. We define the graph~$T_G = (\mathcal{C},A)$ where each vertex in $\mathcal{C}$ is a cycle of $G$ and we have an edge between two nodes $C_1$ and $C_2$ of $T_G$ if and only if $C_1$ and $C_2$ share a vertex in $G$. An \emph{articulation} of $G$ is a vertex whose removal increases the number of connected components of~$G$.
We say that a cycle of a cactus~$G$ is \emph{terminal} if it contains at most one articulation that belongs to at least two cycles of~$G$. Also, let $lp(H)$ be the longest induced path in the graph~$H$.

\begin{lemma}
\label{lemma:percolationnumbertg}
Given a cactus~$G=(V,E)$, $lp(T_G)$ can be computed in linear time in the size of $G$.
\end{lemma}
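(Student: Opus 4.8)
The plan is to reduce the computation of $lp(T_G)$ to finding a longest path in an auxiliary forest derived from the block structure of $G$, a problem that is solvable in linear time. First I would recall that, since $G$ is a cactus, any two distinct cycles share at most one vertex, and such a shared vertex is necessarily a cut vertex of $G$; consequently two cycles are adjacent in $T_G$ exactly when they share a cut vertex. I would then introduce the block-cut tree $\mathrm{BC}(G)$ of $G$ and define an auxiliary bipartite graph $\mathcal{T}$ whose node set consists of the cycles of $G$ together with those cut vertices lying on at least two cycles (call these \emph{hubs}), with an edge joining a cycle $C$ to a hub $v$ whenever $v \in C$. Since $\mathcal{T}$ is precisely the subgraph of $\mathrm{BC}(G)$ induced on the cycle-blocks and the hubs, and $\mathrm{BC}(G)$ is a tree, $\mathcal{T}$ is a forest; moreover every hub has degree at least $2$ in $\mathcal{T}$.

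The heart of the argument is a correspondence between induced paths of $T_G$ and paths of $\mathcal{T}$ with both endpoints at cycle-nodes. Given an induced path $D_1,\dots,D_\ell$ in $T_G$, each consecutive pair $D_i,D_{i+1}$ shares a unique hub $v_i$, and the $v_i$ are pairwise distinct (if $v_{i-1}=v_i$, then $D_{i-1}$ and $D_{i+1}$ would share that vertex, contradicting that the path is induced); this yields the walk $D_1,v_1,D_2,\dots,v_{\ell-1},D_\ell$ in $\mathcal{T}$. I would prove that this walk is in fact a path, and conversely that every cycle-to-cycle path of $\mathcal{T}$ induces an induced path of $T_G$, by exploiting that $\mathcal{T}$ is acyclic: two cycles $D_i,D_j$ at distance at least two along the walk cannot share a hub, since a shared hub would close a cycle in the forest $\mathcal{T}$. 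Because cycle-nodes and hubs alternate along any path of $\mathcal{T}$ (it is bipartite) and hubs are never leaves, a longest path of $\mathcal{T}$ has cycle-node endpoints and visits exactly $\ell$ cycle-nodes and $\ell-1$ hubs; hence $lp(T_G)$ can be read off directly from the length of a longest path of $\mathcal{T}$.

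For the running time, I would first compute all cycles of the cactus and all its cut vertices in linear time via a single depth-first search identifying the biconnected components. The number of cycles is $O(|V|)$, and, because the cycles are pairwise edge-disjoint, the total number of cycle-hub incidences is $\sum_{C}|V(C)| = \sum_{C}|E(C)| \le |E|$, so $\mathcal{T}$ has $O(|V|+|E|)$ nodes and edges and can be built in linear time \emph{without ever materializing the possibly large cliques of $T_G$}. A longest path in a forest is computable in linear time by the standard two-pass search in each tree component, and from its node count I recover $\ell$ and therefore $lp(T_G)$.

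The main obstacle I anticipate is making the correspondence of the second step airtight, in particular verifying that non-consecutive cycles along a $\mathcal{T}$-path remain non-adjacent in $T_G$; this is exactly where the cactus hypothesis is essential, as it forces the block-cut tree to be acyclic and forbids two cycles from sharing two vertices, ruling out the "chords" that would otherwise destroy the path-to-path correspondence. A secondary concern is the linear-time bound itself: a naive construction of $T_G$ could be quadratic when many cycles pass through a single cut vertex, so the argument must operate on the compact forest $\mathcal{T}$ rather than on $T_G$ directly.
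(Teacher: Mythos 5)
Your proposal is correct and follows essentially the same route as the paper: the paper's proof also builds the bipartite cycle--vertex incidence graph (its $T'_G$, which is your $\mathcal{T}$ up to retaining non-hub vertices as leaves), observes that it is a forest because $G$ is a cactus, establishes the same correspondence between induced paths of $T_G$ and alternating paths of the incidence forest, and finishes with a linear-time longest-path computation on a forest. Your added care about the block-cut tree, the distinctness of shared hubs, and the $\sum_C |V(C)| \le |E|$ size bound only makes explicit what the paper leaves implicit.
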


\begin{proof}
Let $\mathcal{C}$ be the set of cycles of $G$ and $T'_G = (\mathcal{C} \cup V, A)$ be the forest we get when $\mathcal{C}$ and $V$ are independent sets and we have an edge between $w \in V$ and $C \in \mathcal{C}$ if and only if $C$ contains $w$ in $G$. Any path in $T'_G$ alternates between a vertex in $V$ and a vertex in $\mathcal{C}$. Furthermore, since $G$ is a cactus, any maximal path in $T'_G$, which is a path that cannot be further extended, starts and ends at a vertex in $V$ and thus has an odd size. Also, the size of~$T'_G$ is linear in the size of $G$, since the number of cycles in a cactus is linear in its number of vertices.

By the definition of $T'_G$ and the fact that any two cycles in $G$ share at most one vertex, it follows that, if $v_0,C_0,v_1,C_1,\ldots,C_k,v_{k+1}$ is a path of $T'_G$, then $C_0,C_1,\ldots,C_k$ is an induced path in $T_G$ and, conversely, if $C_0,C_1,\ldots,C_k$ is an induced path in $T_G$, then there are vertices $v_0,v_1,v_2,\ldots,v_{k+1}$ such that $v_0,C_0,v_1,C_1,\ldots,C_k,v_{k+1}$ is a path of $T'_G$.

This implies that, $lp(T_G) = \frac{k-1}{2}$, where $k$ is the length of a maximum path of $T'_G$. Since there are a linear number of cycles in $G$, $T'_G$ can be constructed from $G$ in linear time. Thus, since we can compute the maximum path of $T'_G$, which is a forest, in linear time, we conclude that $lp(T_G)$ can be computed in linear time in the size of $G$.
\end{proof}

\begin{lemma}
\label{lemma:percolationnumbercacto}
Let~$G=(V,E)$ be a cactus with at least one cycle. Then ~$pn(G) = lp(T_G)+1$.
\end{lemma}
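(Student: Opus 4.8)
The plan is to reduce cycle-convexity percolation on a cactus to a simple ``cycle bootstrap'' process and then read the number of rounds off the structure of $T_G$. First I would record the key structural fact: since every cycle of a cactus is a block and any two cycles meet in at most one vertex, every cycle contained in an induced subgraph $G[S\cup\{u\}]$ is one of the cycles of $G$. Hence $u\in I(S)\setminus S$ if and only if some cycle $C$ of $G$ with $u\in V(C)$ satisfies $V(C)\setminus\{u\}\subseteq S$. Writing $t(u)$ for the round at which $u$ first enters (with $t(u)=0$ for $u\in S$) and $\tau(S)=\max_u t(u)$, the definition of percolation time gives $pn(G)=\max\{\tau(S): S \text{ a hull set}\}$. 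Two consequences I would isolate: a cycle adds its last vertex at most once (it ``fires'' exactly at the round when it first becomes all-but-one), and any vertex lying on no cycle can never be added and so belongs to every hull set; in particular bridges play no role in the timing.

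For the upper bound $pn(G)\le lp(T_G)+1$ I would fix a hull set $S$ and trace the last-infected vertex $z$, with $t(z)=\tau:=\tau(S)$, backwards. Let $D_\tau$ be a cycle firing $z$ at round $\tau$; among $V(D_\tau)\setminus\{z\}$ some vertex $z_{\tau-1}$ enters at exactly round $\tau-1$, and $z_{\tau-1}$ is itself fired by a cycle $D_{\tau-1}$ sharing $z_{\tau-1}$ with $D_\tau$. Iterating produces cycles $D_\tau,\dots,D_1$ and connecting vertices $z_{\tau-1},\dots,z_1$ with $z_j\in V(D_j)\cap V(D_{j+1})$. The firing rounds are distinct, so the $D_j$ are pairwise distinct, and the $z_j$ enter at distinct rounds $1,\dots,\tau-1$, so they too are pairwise distinct. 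Choosing a seed vertex $z_0\in V(D_1)\cap S$ and setting $z_\tau=z$, the sequence $z_0,D_1,z_1,\dots,D_\tau,z_\tau$ is then a genuine simple path of the incidence forest $T'_G$ starting and ending at vertices of $V$. By the $T'_G$--$T_G$ correspondence established in the proof of \Cref{lemma:percolationnumbertg}, $D_1,\dots,D_\tau$ is an induced path of $T_G$ on $\tau$ cycles, whence $\tau-1\le lp(T_G)$ and $\tau\le lp(T_G)+1$.

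For the matching lower bound I would start from a longest induced path $C_0,\dots,C_m$ of $T_G$ (so $m=lp(T_G)$), which by the same correspondence lifts to a path $v_0,C_0,v_1,\dots,C_m,v_{m+1}$ of $T'_G$, where $v_i$ is the vertex shared by $C_{i-1}$ and $C_i$, and $v_0\in V(C_0)$, $v_{m+1}\in V(C_m)$ are leaves of $T'_G$ (so each lies on a single cycle). Taking the seed $S$ by removing $v_1,\dots,v_m,v_{m+1}$ from $V$, I would then argue that the cascade adds $v_i$ at round $i$ for $1\le i\le m$ and $v_{m+1}$ at round $m+1$, giving $\tau(S)=m+1$ and $pn(G)\ge lp(T_G)+1$; combined with the upper bound this yields equality.

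The step I expect to be delicate is precisely controlling this cascade against ``short-circuits''. A cycle off the path that shares an internal vertex $v_i$ could, if otherwise fully seeded, fire $v_i$ already at round $1$ and accelerate the front so that $v_{m+1}$ arrives before round $m+1$. I would neutralise each such cycle by additionally withholding one of its private vertices from the seed, so that it can fire only after $v_i$ has entered; the forest structure of $T'_G$ guarantees that every off-path cycle meets $\{v_1,\dots,v_{m+1}\}$ in at most one vertex, so these local repairs are independent, keep $S$ a hull set, and (since the upper bound already caps $\tau(S)$ at $m+1$) introduce no vertex entering after round $m+1$. Making this bookkeeping precise---verifying that $v_1$ still fires at round $1$, that each $v_i$ enters neither early nor late, and that every withheld private vertex is eventually absorbed---is the main technical burden of the proof.
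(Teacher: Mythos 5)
Your upper bound is sound and is essentially the paper's own argument: trace the last-percolated vertex backwards through a chain of firing cycles, note that the cycles and the linking vertices are pairwise distinct, and read off an induced path of $T_G$ (you route the inducedness through the $T'_G$ correspondence of \Cref{lemma:percolationnumbertg}, the paper argues it directly from the cactus property; this is the same proof).

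The gap is in the lower bound, exactly at the step you flagged as delicate. Your repair withholds ``one of its private vertices'' from each off-path cycle $D$ containing an internal connector $v_i$, but such a private vertex need not exist. Maximality of $C_0,\dots,C_m$ only forces the two \emph{end} cycles to consist of private vertices (otherwise the path would extend); a cycle attached at an internal $v_i$ may have every vertex other than $v_i$ shared with a further cycle. Concretely, take $m=4$: five triangles $C_0,\dots,C_4$ glued along $v_1,\dots,v_4$, with $v_5 \in V(C_4)$ the final leaf, plus a triangle $D=\{v_3,a,b\}$ and triangles $E_a=\{a,a',a''\}$ and $E_b=\{b,b',b''\}$. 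One checks $lp(T_G)=4$, and $D$ has no private vertex, so your repair is undefined for $D$. The obvious fallback of withholding a shared vertex, say $a$, fails: $E_a$ then misses only $a$, so it fires $a$ at round $1$, $D$ fires $v_3$ at round $2$, and the front accelerates ($v_4$ enters at round $3$, $v_5$ at round $4$), so $\tau(S)=4<m+1$ and the witness is lost. Note the failure mode is \emph{earliness}, which your appeal to the upper bound (which only controls lateness) cannot exclude.

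Repairing this is not local: neutralising $D$ forces you to neutralise the entire branch of cycles hanging off $v_i$, withholding from each branch cycle one vertex chosen away from its articulation towards $v_i$ (in the example, withholding $\{a,a',b'\}$ works, giving entries $v_3$ at round $3$, $a$ at round $4$, $a'$ at round $5$); one then also needs that branches off $v_i$ have cycle-depth at most $\min(i,m-i+1)$, again by maximality, so that nothing enters after round $m+1$. This recursive, branch-wide bookkeeping is precisely what the paper's proof packages as an induction on $lp(T_G)$: it peels off \emph{all} terminal cycles (ends of maximum induced paths, which do have private vertices by maximality), adds their private vertices to the seed except one withheld non-articulation vertex per peeled cycle (the set $Q$), and recurses on the smaller cactus, verifying that the added vertices neither speed up nor slow down percolation in the subgraph. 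As written, your ``local and independent'' repairs do not establish $pn(G)\geq lp(T_G)+1$, and closing the gap essentially reconstructs the paper's induction.
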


\begin{proof}
Let~$G=(V,E)$ be a cactus with at least one cycle and $k$ an integer. Let us prove that~$pn(G) \geq k+1$ if and only if there is an induced path of size $k$ in $T_G$.

Suppose that there exists a hull set $S$ of $G$ and a vertex $v_{k+1}$ such that~$v_{k+1} \in I^{k+1}(S) \setminus I^k(S)$. By the definition of a hull set in the cycle convexity, if $w \in I^x(S) \setminus I^{x-1}(S)$, there exists a cycle containing $w$ such that all other vertices of the cycle are in $I^{x-1}(S)$. Moreover, if $x > 1$, at least one vertex in the cycle must be in $I^{x-1}(S) \setminus I^{x-2}(S)$.

In fact, using this reasoning, given~$v_{k+1} \in I^{k+1}(S) \setminus I^k(S)$, we can define inductively the following sequence of cycles in $G$: for all $i = k, k-1, \ldots, 0$, let $C_i$ be a cycle that contains the vertex $v_{i+1}$ such that all other vertices of~$C_i$ are in $I^i(S)$, and, if~$i > 0$, there is a vertex $v_i$ in $I^i(S) \setminus I ^{i-1}(S)$. \Cref{fig:percolationNumbercactoida} illustrates an example of a cactus $G$ and vertices $v_1,v_2$ and $v_3$ based on a hull set $S$, with each cycle of the sequence $C_2,C_1,C_0$ highlighted. Note that any two consecutive cycles $C_j$ and $C_{j-1}$ of this sequence share the vertex $v_j$ and any two non-consecutive cycles of this sequence do not share vertices, since $G$ is a cactus. Thus, we get that this sequence of cycles also denotes an induced path of size $k$ in $T_G$.

\begin{figure}[htb]
\centering
\begin{tikzpicture}[scale=2]
    \tikzset{vertex/.style={draw, circle, minimum size=12pt,inner sep=1pt}};
    \tikzset{hullset/.style={draw, circle, minimum size=12pt,inner sep=2pt,fill=black!25}};
    \node[hullset] (v0) at (0,0){};
    \node[hullset] (c11) at (0.2,0.5){};
    \node[hullset] (c12) at (0.8,0.5){};
    \node[hullset] (c13) at (0.5,-0.5){};
    \node[vertex] (v1) at (1,0){\footnotesize{$v_1$}};
    \node[hullset] (c21) at (1.5,0.5){};
    \node[vertex] (c22) at (1.5,-0.5){};
    \node[hullset] (c22p1) at (1.2,-1){};
    \node[hullset] (c22p2) at (1.8,-1){};
    \node[vertex] (v2) at (2,0){\footnotesize{$v_2$}};
    \node[hullset] (v2p1) at (2,1){};
    \node[vertex] (v2p2) at (2.5,1){};
    \node[hullset] (c31) at (3,0.5){};
    \node[vertex] (v3) at (3,-0.5){\footnotesize{$v_3$}};

    \node () at (0.5,0) {\large{$C_0$}};
    \draw[-] (v0) to (c11) to (c12) to (v1) to (c13) to (v0);
    \node () at (1.5,0) {\large{$C_1$}};
    \draw[-] (v1) to (c21) to (v2) to (c22) to (v1);
    \node () at (2.65,0) {\large{$C_2$}};
    \draw[-] (v2) to (c31) to (v3) to (v2);

    \draw[-] (v2) to (v2p1) to (v2p2) to (v2);
    \draw[-] (c22) to (c22p1) to (c22p2) to (c22);
\end{tikzpicture}
\caption{An example of a hull set $S$ of a cactus, represented by the gray vertices. The vertices $v_1,v_2$ and $v_3$ and sequence of cycles $C_2,C_1,C_0$ are presented as defined.}
\label{fig:percolationNumbercactoida}
\end{figure}
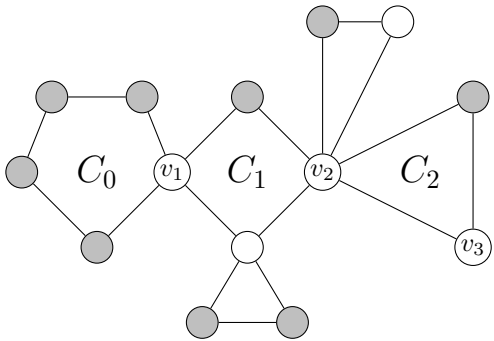

Now, let us prove that if there is an induced path of size $k$ in $T_G$, $pn(G) \geq k+1$. If $G$ is a cactus with at least one cycle, we define $L_G$ to be the set of vertices that belong to only one cycle and this cycle is the end of some maximum induced path of $T_G$, that is, this cycle is terminal. To demonstrate what we want, let us prove by induction on $k \geq 0$ that, for any cactus $G$ with at least one cycle such that $lp(T_G) = k$ and any $v \in L_G$, there is a hull set $S$ of $G$ such that $v \in I^{k+1}(S) \setminus I^k(S)$.

Let $G$ be a cactus such that $lp(T_G) = 0$ and $v \in L_G$. Since $lp(T_G) = 0$, $G$ has exactly one cycle, and thus $v$ is in this cycle. Hence, the set $S = V(G) \setminus \{v\}$ is a hull set such that $v \in I(S) \setminus S$.

Now, let $k \geq 1$ and suppose that for any cactus $G$ such that $lp(T_G) = k-1$ and $v \in L_G$, there exists a hull set $S$ of $G$ such that $v \in I^k(S) \setminus I^{k-1}(S)$. Let $G$ be a cactus such that $lp(T_G) = k$ and $v \in L_G$. Let $C_k$ be the unique cycle to which $v$ belongs, $C_0,C_1,\ldots,C_k$ be a maximum induced path of~$T_G$, and $v'$ be the only vertex common to cycles $C_k$ and $C_{k-1}$. Finally, let $G' = G - (L_G \setminus V(C_0))$. Note that neither the cycle $C_k$ nor the vertex $v$ is present in $G'$, but $v'$ is and $v' \in L_{G'}$.

By the definitions of $L_G$ and the fact that $C_0,C_1,\ldots,C_k$ is a maximum induced path of $T_G$, $lp(T_{G'}) = k-1$ and thus the path $C_0,C_1,\ldots,C_{k-1}$ is induced and maximum in $T_{G'}$. Therefore, by the inductive hypothesis, there is a hull set $S'$ of $G'$ such that $v' \in I^k(S') \setminus I^{k-1}(S')$ as $v' \in L_{G'}$. Let $S = S' \cup (L_G \setminus (V(C_0) \cup Q))$, where $Q$ is any set of vertices containing $v$ and exactly one vertex from each cycle of $G$ that is not in $G'$, except the cycle~$C_k$, that is not an articulation of $G$. \Cref{fig:percolationNumbercactovolta} illustrates an example with the sequence of cycles $C_0,C_1,\ldots,C_k$, for $k = 2$, the vertices $v$ and $v'$, and the sets $S', L_G$ and $Q$ represented by vertices colored black, outlined in red and colored gray respectively.

\begin{figure}[htb]
\centering
\begin{tikzpicture}[scale=2]
    \tikzset{vertex/.style={draw, circle, minimum size=12pt,inner sep=1pt}};
    \tikzset{hullsetSl/.style={draw, circle, minimum size=12pt,inner sep=2pt,fill=black!75}};
    \tikzset{hullsetLG/.style={draw=red, circle, minimum size=12pt,inner sep=2pt}};
    \tikzset{hullsetQ/.style={draw=red, circle, minimum size=12pt,inner sep=2pt, fill = black!25}};
    \node[hullsetSl,hullsetLG,thick] (v0) at (0,0){};
    \node[hullsetSl,hullsetLG,thick] (c11) at (0.2,0.5){};
    \node[hullsetSl,hullsetLG,thick] (c12) at (0.8,0.5){};
    \node[hullsetSl,hullsetLG,thick] (c13) at (0.5,-0.5){};
    \node[vertex] (v1) at (1,0){};
    \node[hullsetSl] (c21) at (1.5,0.5){};
    \node[hullsetSl] (c22) at (1.5,-0.5){};
    \node[hullsetLG] (c22p1) at (1.2,-1){};
    \node[hullsetQ] (c22p2) at (1.8,-1){};
    \node[vertex] (v2) at (2,0){\footnotesize{$v'$}};
    \node[hullsetLG] (v2p1) at (2,1){};
    \node[hullsetQ] (v2p2) at (2.5,1){};
    \node[hullsetLG] (c31) at (3,0.5){};
    \node[hullsetQ] (v3) at (3,-0.5){$v$};

    \node () at (0.5,0) {\large{$C_0$}};
    \draw[-] (v0) to (c11) to (c12) to (v1) to (c13) to (v0);
    \node () at (1.5,0) {\large{$C_1$}};
    \draw[-] (v1) to (c21) to (v2) to (c22) to (v1);
    \node () at (2.65,0) {\large{$C_2$}};
    \draw[-] (v2) to (c31) to (v3) to (v2);

    \draw[-] (v2) to (v2p1) to (v2p2) to (v2);
    \draw[-] (c22) to (c22p1) to (c22p2) to (c22);
\end{tikzpicture}
\caption{An example with the sequence of cycles $C_0,C_1,\ldots,C_k$, for $k = 2$, the vertices~$v$ and $v'$, and the sets $S', L_G$ and $Q$ represented by the black vertices, vertices outlined in red and gray vertices, respectively.}
\label{fig:percolationNumbercactovolta}
\end{figure}
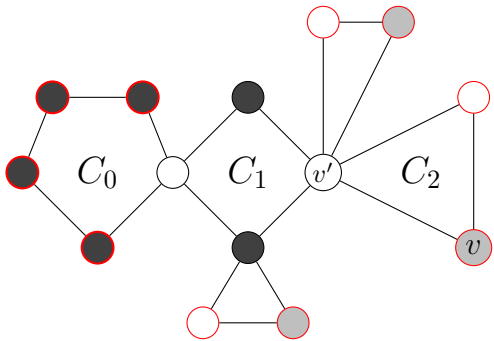

As $V(C_0) \subseteq V(G')$ and $V(G') \subseteq H(S') \subseteq H(S)$, to show that $S$ is a hull set of $G$, we just need to show that the vertices in $Q$ are in $H(S)$. Since, for each $w \in Q$, considering $C_w$ as the cycle that contains $w$ in $G$, we have that (i) there is a vertex in $C_w$ that is in $H(S')$ and, consequently, is in~$H(S)$, which is the only vertex that is also a vertex of $G'$; and (ii) all the other vertices of $C_w$, except $w$, are in the set $S$ itself, we can indeed conclude that~$Q \subseteq H(S)$ and, thus, $S$ is a hull set of $G$.

Since $S \setminus S'$ contains only vertices in terminal cycles and it neither contains the articulation of such cycles nor all vertices of any terminal cycle, for any~$w \in V(G')$, we have that $w \in I^k(S')$ if and only if $w \in I^k(S)$. Hence, since $v' \in I^k(S') \setminus I^{k-1}(S')$, this implies that $v' \in I^k(S) \setminus I^{k-1}(S)$ and then we can conclude that $v \in I^{k+1}(S) \setminus I^k(S)$.
\end{proof}

\begin{theorem} 
Given a cactus~$G$, $pn(G)$ can be computed in linear time.
\end{theorem}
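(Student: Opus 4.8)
The plan is to obtain the theorem as an immediate consequence of the two preceding lemmas, handling the acyclic case separately. First I would dispose of the situation in which $G$ contains no cycle. Because the interval function of the cycle convexity adds a vertex $u$ to $I(S)$ only when $G[S \cup \{u\}]$ has a cycle through $u$, a forest satisfies $I(S) = S$ for every $S \subseteq V(G)$. Hence the only hull set of an acyclic connected $G$ is $V(G)$ itself, and $I^{k-1}(V(G)) = V(G)$ for every $k \geq 1$; by the definition of percolation time this yields $pn(G) = 0$. Deciding whether a connected graph is acyclic, for instance by testing $|E| < |V|$ or by a single graph traversal, takes linear time.

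For the complementary case, in which $G$ has at least one cycle, I would apply \Cref{lemma:percolationnumbercacto} to reduce the computation of $pn(G)$ to that of $lp(T_G)$ via the identity $pn(G) = lp(T_G) + 1$, and then apply \Cref{lemma:percolationnumbertg}, which computes $lp(T_G)$ in time linear in the size of $G$. Returning $lp(T_G) + 1$ then finishes this case.

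If one additionally wishes to treat a disconnected cactus, the remark opening the section reduces the problem to taking the maximum of $pn$ over the connected components; the component decomposition and the per-component computations are each linear, so the overall running time stays linear.

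Since the theorem is essentially a corollary of the two lemmas, I do not expect a genuine obstacle. The only delicate point is that \Cref{lemma:percolationnumbercacto} presupposes the presence of a cycle, so the acyclic (tree) case must be justified on its own; that is where I would concentrate the argument.
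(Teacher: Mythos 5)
Your proposal is correct and follows essentially the same route as the paper: the paper's proof likewise just combines \Cref{lemma:percolationnumbertg,lemma:percolationnumbercacto} with the observation that $pn(G)=0$ if and only if $G$ is acyclic. Your extra detail justifying the acyclic case (in a forest $I(S)=S$ for all $S$, so $V(G)$ is the unique hull set) is a fine elaboration of that same observation, not a different approach.
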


\begin{proof}
The theorem follows directly from \Cref{lemma:percolationnumbertg,lemma:percolationnumbercacto}, and the fact that~$pn(G) = 0$ if and only if $G$ has no cycles.
\end{proof}

\subsection{Percolation Time is polynomial for fixed \texorpdfstring{$k \leq 2$}{k ≤ 2}}

Now we present a cubic algorithm to solve the \nameref{prob:percolationNumber} problem when $k \leq 2$. For any graph $G$, $pn(G) \geq 0$, and~$pn(G) \geq 1$ if and only if~$G$ has at least one cycle. Thus, for fixed~$k \leq 1$, the problem can be solved is linear time. Henceforth, in this subsection, we focus on the case where~$k=2$.

Let~$G=(V,E)$ be a graph and~$v,w \in V$ be two adjacent vertices. We define~$R_G(v,w)$ as the set of all subgraphs of~$G$ induced by a set of vertices $A \cup B$, such that $G[A]$ is a chordless cycle $C$ containing the edge~$vw$, $G[B]$ is a (not necessarily induced) cycle $C'$ that contains~$w$ and does not contain~$v$, and~$v$ has at most one neighbor in the set~$B \setminus \{w\}$. Note that the cycles~$C$ and~$C'$ may have other vertices aside from~$w$ in common.

\begin{lemma}
\label{lem:percolationNumberk2Estrutura}
Let~$G=(V,E)$ be a graph. There are neighbors~$v,w \in V$ such that~$R_G(v,w) \neq \varnothing$ if and only if there is a set~$S \subseteq V$ with $I^2(S) \setminus I(S) \neq \varnothing$. Furthermore, given two neighbors~$v,w \in V$ in which~$R_G(v,w) \neq \varnothing$, such a set~$S$ can be computed in linear time in the size of $G$.
\end{lemma}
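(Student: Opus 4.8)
The plan is to prove the biconditional in two directions and then extract the linear-time construction from the forward direction. Throughout, $I$ denotes the interval function of the cycle convexity, so that a vertex enters $I(X)$ exactly when it lies on a cycle all of whose other vertices are already in $X$.

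\textbf{Forward direction} ($R_G(v,w)\neq\varnothing \Rightarrow \exists S$). Among all neighbors and all witnessing subgraphs, I would fix neighbors $v,w$ and a subgraph induced by $A\cup B\in R_G(v,w)$ minimizing $|A\cup B|$, and set $S=(A\cup B)\setminus\{v,w\}$. Three claims then give $v\in I^2(S)\setminus I(S)$. First, since $v\notin B$ we have $B\setminus\{w\}\subseteq S$, so the cycle $G[B]$ lies inside $G[S\cup\{w\}]$ and hence $w\in I(S)$. Second, because $A\setminus\{v,w\}\subseteq S\subseteq I(S)$ and $w\in I(S)$, all of $A\setminus\{v\}$ lies in $I(S)$, so the chordless cycle $G[A]$ witnesses $v\in I^2(S)$. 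Third, and this is the crux, I must show $v\notin I(S)$. Here chordlessness of $A$ forces $v$ to have exactly one neighbor $a$ in $A\setminus\{v,w\}$ (its second cycle-neighbor), and the defining hypothesis gives $v$ at most one neighbor in $B\setminus\{w\}$; hence $v$ has at most two neighbors in $S$. A vertex with at most one neighbor in $S$ lies on no cycle of $G[S\cup\{v\}]$, so the only danger is that $v$'s two neighbors are joined by a path inside $G[S]$, producing a premature cycle through $v$; ruling this out is exactly where minimality of the witness is needed.

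\textbf{Backward direction} ($\exists S \Rightarrow R_G(v,w)\neq\varnothing$). Given $S$ with $I^2(S)\setminus I(S)\neq\varnothing$, I would pick $v\in I^2(S)\setminus I(S)$ and take a cycle through $v$ in $G[I(S)\cup\{v\}]$ with the fewest vertices; any such minimal cycle $A$ is chordless (a chord would split off a shorter cycle still through $v$) and satisfies $A\setminus\{v\}\subseteq I(S)$. If all of $A\setminus\{v\}$ lay in $S$, then $G[A]$ would already percolate $v$ at the first step, contradicting $v\notin I(S)$; so some vertex of $A\setminus\{v\}$ belongs to $I(S)\setminus S$. That vertex entered $I(S)$ via a cycle $B$ whose remaining vertices lie in $S$, and $B$ avoids $v$ precisely because $v\notin I(S)$. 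Using this vertex as $w$ together with an incident edge of $A$ yields a chordless cycle through the designated edge and a cycle $B$ through one endpoint avoiding the other; the ``at most one neighbor in $B\setminus\{w\}$'' requirement is then forced by the fact that the relevant endpoint did not itself percolate one step earlier.

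\textbf{Linear time.} Given neighbors $v,w$ with $R_G(v,w)\neq\varnothing$, a witness $(A,B)$—a chordless cycle through the edge $vw$ and a cycle through $w$ avoiding $v$—is obtainable by standard search in linear time, after which $S=(A\cup B)\setminus\{v,w\}$ is read off directly, giving the claimed bound.

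I expect the decisive difficulty to be the third claim of the forward direction: guaranteeing that the constructed $S$ does not already percolate $v$ at the first step. Bounding $v$'s degree in $G[S]$ by two is immediate from chordlessness and the hypothesis, but excluding a connecting path between those two neighbors inside $G[S]$ is the subtle point, and it is exactly here that the minimality of the witness $(A,B)$ (or, failing that, a surgery passing to a smaller element of some $R_G(\cdot,\cdot)$) must be exploited. The mirror of this subtlety—locating, on the chordless cycle, a neighbor configuration whose step-$1$ cycle $B$ genuinely avoids $v$ while leaving $v$ with at most one neighbor in $B\setminus\{w\}$—is the corresponding delicate point in the backward direction, modulo routine checking of degenerate overlaps between $A$ and $B$.
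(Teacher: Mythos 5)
Your skeleton is right and you have correctly located the two delicate points, but at both of them you defer rather than prove, and those two points are essentially the entire content of the lemma; moreover, the deferred steps do not go through as you suggest. In the forward direction, taking $S=(A\cup B)\setminus\{v,w\}$ uniformly creates exactly the problem you flag: when $v$ has a neighbor $u\in B\setminus\{w\}$, $v$ has two neighbors in $S$ and you must exclude a connecting path in $G[S]$. Your remedy (minimize $|A\cup B|$ over all witnesses) is not carried out, and it is unclear it can be: if such a path exists, the natural surgery replaces $A$ by a chordless cycle through the edge $vu$ inside the premature cycle and keeps $B$ as the second cycle for the new pair $(v,u)$ (whose conditions it does satisfy, $w$ being $v$'s unique neighbor in $B\setminus\{u\}$), but the new witness's vertex set is contained in $A\cup B$ only weakly, not properly, so minimality yields no contradiction and the iteration has no evident decreasing measure. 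The paper sidesteps this entirely with a case split that makes $v\notin I(S)$ trivial: if $v$ has a neighbor $u\in B\setminus\{w\}$, take $S=B\setminus\{w\}$, discarding $A$ altogether, so that $u$ is $v$'s \emph{only} neighbor in $S$, while $w\in I(S)$ via $B$ and the cycle $v,u,\dots,w,v$ along $B$ gives $v\in I^2(S)$; if $v$ has no neighbor in $B\setminus\{w\}$, take $S=(A\cup B)\setminus\{v,w\}$, where chordlessness of $A$ leaves $v$ with exactly one neighbor in $S$.

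In the backward direction the gap is that your chosen pair need not satisfy the definition of $R_G$. The vertex of $A\setminus\{v\}$ lying in $I(S)\setminus S$ (call it $w'$) need not be adjacent to $v$, since the chordless cycle $A$ can be long; and pairing $w'$ with its cycle-neighbor on $A$, say $w''$, via "an incident edge of $A$" does not work either: nothing forces $w''$ to have at most one neighbor in $V(B)\setminus\{w'\}$, and $B$ need not avoid $w''$ (which may lie in $S$, hence on $B$). The "at most one neighbor" property is forced only for the special vertex $v\in I^2(S)\setminus I(S)$ itself, and that vertex is generally not adjacent to $w'$. The paper closes this by walking along one of the two $v$--$w'$ paths of $A$ starting from $w'$, taking the \emph{first} vertex $v^{*}$ that is outside $D$ (the cycle putting $w'$ into $I(S)$) and has at most one neighbor in $V(D)\setminus\{w'\}$ (existence guaranteed by $v$), and letting $w^{*}$ be its path-neighbor toward $w'$; then either $w^{*}\in V(D)$ and $(A,D)$ witnesses $R_G(v^{*},w^{*})\neq\varnothing$, or $w^{*}\notin V(D)$, in which case $w^{*}$ has two neighbors in $V(D)\setminus\{w'\}$ and one reroutes $D$ through $w^{*}$ (a path of $D$ between two such neighbors avoiding $w'$, plus $w^{*}$) to obtain the required second cycle. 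This selection-and-rerouting argument is precisely the step you postponed, and without it the backward implication is not established; the linear-time claim also inherits the forward-direction gap, since the set $S$ you read off the witness is the uniform one that may fail $v\notin I(S)$.
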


\begin{proof}
Let~$v,w \in V(G)$ be two neighbors. Moreover, let~$G[R] \in R_G(v,w)$ be an induced subgraph of~$G$. We consider a vertex subset~$R = V(C) \cup V(C')$, where~$C$ is an induced cycle of~$G$ that contains the edge~$vw$ and~$C'$ is a cycle that contains~$w$ and does not contain~$v$ such that~$v$ has at most one neighbor in the set~$V(C') \setminus \{w\}$.

If~$v$ has only one neighbor $u$ in~$V(C') \setminus \{w\}$, we have that the set~$S = V(C') \setminus \{w\} $ is such that~$v \in I^2(S) \setminus I(S)$. This happens because (i)~$v$ has only one neighbor in~$S$ and thus $v \notin I(S)$, and (ii) $v \in I^2(S)$, since $w \in I(S)$, there is a path from $u$ to $w$ through $C'$, and both vertices are adjacent to $v$. An example of this case is illustrated in \Cref{fig:percolationNumberk2EstruturaIdaCaso1}.

\begin{figure}[htb]
\centering
\begin{tikzpicture}[scale=1]
    \tikzset{vertex/.style={draw, circle, minimum size=10pt,inner sep=1pt, font=\footnotesize}};
    \tikzset{hullset/.style={draw, circle, minimum size=10pt,inner sep=1pt,fill=black!40, font=\footnotesize}};
    \node[hullset, minimum size=7pt] (q1) at (4,0){};
    \node[hullset] (u) at (4,2){$u$};
    \node[vertex] (v) at (2,2){$v$};
    \node[vertex] (w) at (2,0){$w$};
    \node[vertex, minimum size=7pt] (c) at (0,0){};
    \draw[-,blue] (v) to (c) to (w) to (v);
    \draw[-] (v) to (u) to (c);
    \draw[-,red] (u) to (w) to (q1) to (u);
        
\end{tikzpicture}
\caption{An example where~$v$ has exactly one neighbor in~$V(C') \setminus \{w\}$. The cycle~$C$ is denoted by the blue edges, the cycle~$C'$ by the red ones and~$S$ by the gray vertices.}
\label{fig:percolationNumberk2EstruturaIdaCaso1}
\end{figure}

If~$v$ has no neighbors in the set~$V(C') \setminus \{w\}$, let~$S = R \setminus \{v,w\}$. Since, $V(C') \setminus \{w\} \subseteq S$,~$w \in I(S)$, which implies that~$v \in I^2(S)$. On the other hand, since $v$ has only one neighbor in $S$, which is its neighbor in $C$ other than $w$, $v \notin I(S)$. An example of this case is illustrated in \Cref{fig:percolationNumberk2EstruturaIdaCaso2}.

\begin{figure}[htb]
\centering
\begin{tikzpicture}[scale=1]
    \tikzset{vertex/.style={draw, circle, minimum size=10pt,inner sep=1pt, font=\footnotesize}};
    \tikzset{hullset/.style={draw, circle, minimum size=7pt,inner sep=2pt,fill=black!40, font=\footnotesize}};
    \node[hullset] (q1) at (0,0){};
    \node[hullset] (q2) at (-0.5,0.5){};
    \node[vertex] (v) at (0:1.5){$v$};
    \node[vertex] (w) at (72:1.5){$w$};
    \node[hullset] (c1) at (144:1.5){};
    \node[hullset] (c2) at (216:1.5){};
    \node[hullset] (c3) at (288:1.5){};
    
    \draw[-,blue] (v) to (w) to (c1) to (c2) to (c3) to (v);
    \draw[-,red] (q2) to (w) to (q1) to (c2) to (q2);
    \draw[-] (q2) to (q1);
        
\end{tikzpicture}
\caption{An example of the case where~$v$ has no neighbors in~$V(C') \setminus \{w\}$. The cycle~$C$ is denoted by the blue edges, the cycle~$C'$ by the red ones and~$S$ by the gray vertices.}
\label{fig:percolationNumberk2EstruturaIdaCaso2}
\end{figure}
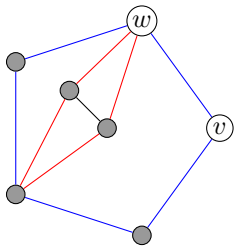

Suppose there are a vertex~$v' \in V(G)$ and a set~$S \subseteq V(G)$ such that~$v' \in I^2(S)\setminus I(S)$. Thus, there is a cycle containing $v'$. Let~$C$ be a smallest cycle of~$G$ that contains~$v'$ such that, for all~$q \in V(C) \setminus \{v'\}$,~$q \in I (S)$, and there exists~$w' \in V(C) \setminus \{v'\}$ such that~$w' \in I(S) \setminus S$. Let~$D$ be a cycle that contains~$w'$ such that, for all~$q \in V(D) \setminus \{w'\}$,~$q \in S$. Since~$v' \in I^2(S)\setminus I(S)$, $v'$ has at most one neighbor in~$V(D) \setminus \{w'\}$. Note that the cycles~$C$ and~$D$ do exist, since~$v' \in I^2(S)\setminus I(S)$ and~$w' \in I(S) \setminus S$. Also, since $C$ is the smallest cycle of~$G$ that contains~$v'$ such that, for all~$q \in V(C) \setminus \{v'\}$,~$q \in I (S)$, and there is no cycle $Q$ that contains~$v'$ such that, for all~$q \in V(Q) \setminus \{v'\}$,~$q \in S$, then we can infer that $C$ is an induced cycle.

We can see the induced cycle~$C$ as two vertex-disjoint paths both with ends at~$v'$ and~$w'$, one of which we name~$P$. Define~$v$ as the first vertex that is not in~$D$ and has at most one neighbor in~$V(D) \setminus \{w'\}$ we find when we go through $P$ starting at $w'$, and let $w$ be its neighbor in $P$ that is closest to~$w'$. Since~$v'$ is not in~$D$ and has at most one neighbor in~$V(D) \setminus \{w'\}$, such pair of vertices $v$ and $w$ are well defined.

First, suppose that $w \in V(D)$. If, $w' = w$, $v$ has at most one neighbor in~$V(D) \setminus \{w\}$, as $v$ has at most one neighbor in~$V(D) \setminus \{w'\}$. If $w' \neq w$,~$v$ has exactly one neighbor in~$V(D) \setminus \{w'\}$, which is $w$. Hence, $v$ has at most one neighbor in~$V(D) \setminus \{w\}$. Thus, either way, $v$ has at most one neighbor in~$V(D) \setminus \{w\}$ and $v \notin V(D)$, which implies that~$G[V(C) \cup V(D)] \in R(v,w)$. \Cref{fig:percolationNumberk2EstruturaVoltaCaso1} illustrates this case.

\begin{figure}[htb]
\centering
\begin{tikzpicture}[scale=1]
    \tikzset{vertex/.style={draw, circle, minimum size=15pt,inner sep=1pt, font=\footnotesize}};
    \tikzset{hullset/.style={draw, circle, minimum size=15pt,inner sep=1pt,fill=black!40, font=\footnotesize}};
    \node[hullset, minimum size=7pt] (q1) at (0,0){};
    \node[hullset, minimum size=7pt] (q2) at (-0.5,0.5){};
    \node[vertex] (vl) at (0:1.5){$v'$};
    \node[vertex] (wl) at (72:1.5){$w'$};
    \node[hullset, minimum size=7pt] (c1) at (144:1.5){};
    \node[hullset] (w) at (216:1.5){$w$};
    \node[hullset] (v) at (288:1.5){$v$};
    
    \draw[-,blue] (vl) to (wl) to (c1) to (w) to (v) to (vl);
    \draw[-,red] (q2) to (wl) to (q1) to (w) to (q2);
    \draw[-] (q2) to (c1);
        
\end{tikzpicture}
\caption{An example for the case where $w \in V(D)$ and $w \neq w'$. Set $S$ is depicted by the gray vertices. The cycle~$C$ is denoted by the blue edges and the cycle~$D$ by the red edges.}
\label{fig:percolationNumberk2EstruturaVoltaCaso1}
\end{figure}
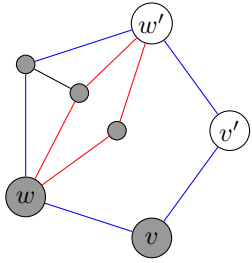

Now, suppose that $w \notin V(D)$. Then, by its own definition, $w$ has two or more neighbors in~$V(D) \setminus \{w'\}$ because, otherwise, $w$ would be actually~$v$ and not $w$. Let $x$ and $y$ be such neighbors of $w$ in~$V(D) \setminus \{w'\}$. Since $D$ is a (not necessarily chordless) cycle, there exists a path $P$ with ends in $x$ and $y$ in $D$ that avoids $w'$ entirely. Let $C'$ be the cycle formed by $P$ and the vertex $w$. Note that $V(P) \subseteq V(D)$, which implies that $V(C') \setminus \{w\} \subseteq V(D)$. So, since $v$ has at most one neighbor in~$V(D) \setminus w'$ and $w' \notin V(C')$, then~$v$ also has at most one neighbor in~$V(C') \setminus w$. We then conclude that $v$ has at most one neighbor in~$V(C') \setminus \{w\}$ and $v \notin V(D)$, which implies that~$G[V(C) \cup V(C')] \in R(v,w)$. \Cref{fig:percolationNumberk2EstruturaVoltaCaso2} illustrates this case.

\begin{figure}[htb]
\centering
\begin{tikzpicture}[scale=1.2]
    \tikzset{vertex/.style={draw, circle, minimum size=15pt,inner sep=1pt, font=\footnotesize}};
    \tikzset{hullset/.style={draw, circle, minimum size=15pt,inner sep=1pt,fill=black!40, font=\footnotesize}};
    \node[hullset] (x) at (108:-0.5){$x$};
    \node[hullset] (y) at (108:0.5){$y$};
    \node[hullset, minimum size=7pt] (c1) at (0:1.5){};
    \node[hullset, minimum size=7pt] (c2) at (30:0.7){};
    \node[vertex] (wl) at (72:1.5){$w'$};
    \node[hullset] (w) at (144:1.5){$w$};
    \node[hullset] (v) at (216:1.5){$v$};
    \node[vertex] (vl) at (288:1.5){$v'$};
    
    \draw[-,blue] (c1) to (wl) to (w) to (v) to (vl) to (c1);
    \draw[-,red] (wl) to (y);
    \draw[dashed,red] (y) to (x);
    \draw[-,red] (x) to (c2) to (wl);
    \draw[dashed] (x) to (w) to (y);
    \draw[-] (v) to (x);
        
\end{tikzpicture}
\caption{An example for the case where $w \notin V(D)$. Set $S$ contains the vertices colored as gray. The cycle~$C$ is denoted by the blue edges, the cycle~$D$ by the red edges, and the cycle $C'$ by the dashed edges.}
\label{fig:percolationNumberk2EstruturaVoltaCaso2}
\end{figure}
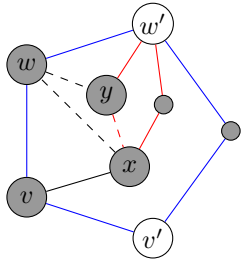

Given neighbors $v,w \in V$, in order to determine whether~$R(v,w) \neq \varnothing$ or not and, if in fact~$R(v,w)  \neq \varnothing$, to construct a subgraph in set~$R(v,w)$, we just need to construct, if it exists, an induced cycle~$C$ that contains the edge~$vw$ and then construct, if it exists, a cycle~$C'$ that contains~$w$ and does not contain~$v$ such that~$v$ has at most one neighbor in the set~$V(C') \setminus \{w\}$. If there are such~$C$ and~$C'$,~$G[V(C) \cup V(C')] \in R(v,w)$, otherwise,~$R(v ,w) = \varnothing$. Checking the existence of $C$ and $C'$ can be done independently and in linear time both with simple modifications of the standard DFS algorithm.
\end{proof}

\begin{lemma}
\label{lem:percolationNumberk2Extensao}
Let~$G=(V,E)$ be a graph,~$v \in V$ and~$Q \subseteq V$. If~$v \in I^2(Q)\setminus I(Q)$, then there exists a hull set~$S \supseteq Q$ of~$G$ such that~$v \in I^2(S) \setminus I(S)$. Additionally, such hull set~$S$ can be computed in cubic time.
\end{lemma}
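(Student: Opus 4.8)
The plan is to produce $S$ as a \emph{maximal} superset of $Q$ inside $V \setminus \{v\}$ that keeps $v$ outside $I(S)$, and then to show that maximality forces $S$ to be a hull set. First I would record two elementary facts about the cycle convexity. The interval function is monotone: $A \subseteq B$ implies $I(A) \subseteq I(B)$, since any cycle through $u$ in $G[A \cup \{u\}]$ is also a cycle through $u$ in $G[B \cup \{u\}]$. Also, for $v \notin S$, we have $v \in I(S)$ if and only if some connected component of $G[S]$ contains two neighbours of $v$, because a cycle through $v$ in $G[S \cup \{v\}]$ is exactly two distinct neighbours of $v$ in $S$ linked by a path inside $G[S]$. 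Note that necessarily $v \notin S$, as otherwise $v \in S \subseteq I(S)$. Since $Q \subseteq S$ and $v \in I^2(Q)$, monotonicity gives $v \in I^2(S) \subseteq H(S)$ for free; hence the only conditions left to secure are $v \notin I(S)$ and $H(S) = V$.

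Next I would let $S$ be any set with $Q \subseteq S \subseteq V \setminus \{v\}$ that is maximal subject to $v \notin I(S)$, which exists because $Q$ already satisfies $v \notin I(Q)$ and $G$ is finite. Then $v \notin I(S)$ holds by construction and $v \in I^2(S)$ by the previous paragraph, so the whole argument reduces to the crucial claim $H(S) = V$, and this is the hard part. I would prove it by contradiction: suppose $u \in V \setminus H(S)$, so in particular $u \neq v$ because $v \in H(S)$. By maximality, $v \in I(S \cup \{u\})$; and since $v \notin I(S)$, every cycle through $v$ in $G[S \cup \{u,v\}]$ must use $u$, so there is a cycle $Z$ passing through both $v$ and $u$ whose remaining vertices all lie in $S$. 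As $S \subseteq H(S)$ and $v \in H(S)$, all vertices of $Z$ other than $u$ belong to $H(S)$, whence $Z$ is a cycle through $u$ in $G[H(S) \cup \{u\}]$ and $u \in I(H(S))$. But $H(S)$ is convex, so $I(H(S)) = H(S)$ and $u \in H(S)$, contradicting the choice of $u$. Thus $H(S) = V$ and $S$ is the required hull set.

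For the algorithm I would realise such a maximal set by a single greedy sweep: initialise $S := Q$ and, going through the vertices of $V \setminus (Q \cup \{v\})$ once, append $u$ to $S$ whenever $v \notin I(S \cup \{u\})$ and skip it otherwise. One pass suffices because the invariant ``every component of $G[S]$ holds at most one neighbour of $v$'' only becomes harder to satisfy as $S$ grows: two components each carrying a neighbour of $v$ can never be merged without breaking the invariant, so once a vertex fails the test it stays unaddable, and the sweep therefore ends at a maximal $S$ to which the existence argument applies. Each of the $O(|V|)$ tests checks whether $v$ has two neighbours in one component of $G[S \cup \{u\}]$, which one connected-components pass decides in $O(|V| + |E|)$ time, giving $O(|V|^3)$ overall; I would note in passing that maintaining the components with a union-find structure across the sweep reduces this to near-linear, but the cubic bound is all that is claimed.
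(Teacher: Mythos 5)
Your proof is correct, but it takes a genuinely different route from the paper's. The paper builds $S$ \emph{top-down}: it sets $V' = \{v\} \cup Q \cup (V \setminus H(Q))$ and repeatedly deletes a vertex $w_i \in V \setminus H(Q)$ lying on a cycle through $v$ in the current subgraph, stopping when no cycle through $v$ remains, and takes $S = V' \setminus (\{v\} \cup R_k)$; the hull-set property is then shown by a backward induction along the removal sequence, arguing that each deleted $w_i$ re-enters $H(S)$ once the later ones have. You instead grow $S$ \emph{bottom-up} from $Q$ to a maximal subset of $V \setminus \{v\}$ with $v \notin I(S)$, and get the hull-set property from an extremal argument: if $u \notin H(S)$, maximality gives a cycle through both $v$ and $u$ whose other vertices lie in $S$, and since $S \cup \{v\} \subseteq H(S)$ and $I(H(S)) = H(S)$, this forces $u \in H(S)$, a contradiction. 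Your existence argument is shorter and more conceptual, trading the paper's explicit bookkeeping for the abstract fact that the hull is convex; it also isolates the one place where the hypothesis $v \in I^2(Q)$ is really needed (to put $v$ itself into $H(S)$ so that cycles through $v$ can later absorb omitted vertices). Your algorithmic claim is also sound, and the justification of the single greedy sweep is really just monotonicity of $I$: if $v \in I(S \cup \{u\})$ then $v \in I(S' \cup \{u\})$ for every $S' \supseteq S$, so a rejected vertex stays rejected and the sweep ends at a maximal set; your component-merging phrasing is an equivalent way of saying this. Both algorithms run in cubic time, so neither approach buys a complexity advantage; the paper's construction is more explicit about which vertices are excluded (a cycle-breaking set inside $V \setminus H(Q)$), but nothing beyond the lemma's statement is used elsewhere, so your version would serve the paper equally well.
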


\begin{proof}
Suppose that~$v \in I^2(Q)\setminus I(Q)$. Let~$R_0 = \{\}$. Also, let~$V' = \{v\} \cup Q \cup (V \setminus H(Q))$, that is,~$V'$ is the set of vertices consisting in the vertex~$v$ and all the vertices that are not in $H(Q) \setminus Q$. Define, for~$i \geq 1$,~$w_i$ to be a vertex in~$V \setminus H(Q)$ that belongs to some cycle~$C$ of~$G[V' \setminus R_{i-1}]$ which also contains~$v$ if such a cycle exists. If such a cycle~$C$ exists, let~$R_i = R_{i-1} \cup \{w_i\}$. If, for a given value of~$i=k+1$, such a cycle~$C$ does not exist, the sequence of sets~$R_i$ ends at the index~$k$ and we define~$S = V' \setminus (\{v\} \cup R_k)$. The $\Cref{fig:percolationNumberExampleSet}$ illustrates an example with the vertex $v$ and the sets~$Q, H(Q),R_k$ and~$S$.

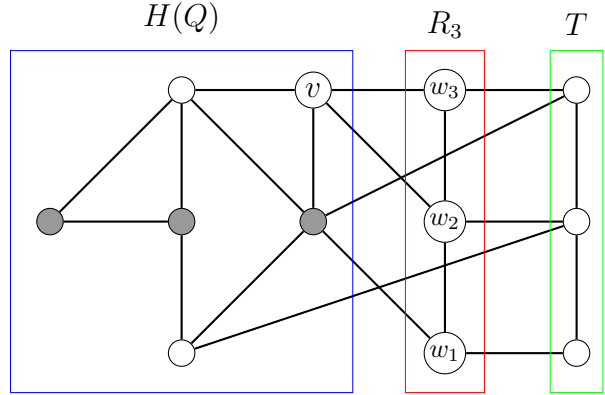
\begin{figure}[htb]
\centering
\begin{tikzpicture}[scale=1.3]
    \tikzset{vertex/.style={draw, circle, minimum size=10pt,inner sep=1pt}};
    \tikzset{hullset/.style={draw, circle, minimum size=10pt,inner sep=2pt,fill=black!40}};
    \node[hullset] (q1) at (1.25,0){};
    \node[hullset] (q2) at (0,0){};
    \node[hullset] (q3) at (-1.25,0){};
    \node[vertex,inner sep=2pt] (v) at (1.25,1){$v$};
    \node[vertex] (hq1) at (0,1){};
    \node[vertex] (hq2) at (0,-1){};

    \node[vertex] (w1) at (2.5,-1){\footnotesize{$w_1$}};
    \node[vertex] (w2) at (2.5,0){\footnotesize{$w_2$}};
    \node[vertex] (w3) at (2.5,1){\footnotesize{$w_3$}};
    \node[vertex] (ot1) at (3.75,1){};
    \node[vertex] (ot2) at (3.75,0){};
    \node[vertex] (ot3) at (3.75,-1){};
        
    \draw[-,thick] (q1) to (hq2) to (q2) to (q3) to (hq1) to (v) to (w2) to (w3) to (v) to (q1) to (hq1) to (q2);

    \draw[-,thick] (w1) to (q1) to (ot1) to (w3);
    \draw[-,thick] (ot1) to (ot2) to (w2) to (w1) to (ot3) to (ot2) to (hq2);

    \node () at (0,1.55){$H(Q)$};
    \node () at (2.5,1.5){$R_3$};
    \node () at (3.75,1.5){$T$};
    \draw[blue,ultra thin] (-1.55,1.3) rectangle (1.55,-1.3);
    \draw[red,ultra thin] (2.2,1.3) rectangle (2.8,-1.3);
    \draw[green,ultra thin] (3.55,1.3) rectangle (3.95,-1.3);
\end{tikzpicture}
\caption{An example of a graph with the vertex $v$ and the sets~$Q$, represented by the gray vertices, $H(Q), R_3$ and $S = Q \cup T$.}
\label{fig:percolationNumberExampleSet}
\end{figure}

Note that $Q \subseteq S$, which implies that $H(Q) \subseteq H(S)$. This also implies that either~$v \in I(S)$ or~$v \in I^2(S) \setminus I(S)$, since~$v \notin S$ and~$v \in I^2(Q) \setminus I(Q)$. Therefore, in order to prove what we want, we just need to show that~$S$ is a hull set and~$v \notin I(S)$.

Thus, first, let us prove that~$S$ is a hull set. To do this, we have to show that~$V \setminus H(Q) \subseteq H(S)$, since~$H(Q) \subseteq H(S)$. Let~$k$ be the last index of the sequence of sets~$R_i$ and let~$R_k=\{w_1,w_2,\ldots,w_k\}$. Each vertex in~$V \setminus H(Q)$ is either in~$R_k$ or in~$S$, so, in fact, we just need to show that~$R_k \subseteq H(S)$. First, remember that~$V' = \{v\} \cup Q \cup (V \setminus H(Q))$, which is the set consisting of basically every vertex that is not in $H(Q)$, except for $v$ and the vertices in $Q$. Let $T = V \setminus (H(Q) \cup R_k)$ and thus $V' \setminus R_k = \{v\} \cup Q \cup T$. Since $T \cup Q \subseteq S$ and $v \in H(S)$, then $V' \setminus R_k \subseteq H(S)$.

Now, let us show by induction on $i=k,k-1,\ldots,0$ that $V' \setminus R_i \subseteq H(S)$, which implies that $R_k \subseteq H(S)$, since $R_k \subseteq V' \setminus R_0$. We already proved that~$V' \setminus R_k \subseteq H(S)$. Let $0 \leq i \leq k-1$ and assume that $V' \setminus R_{i+1} \subseteq H(S)$. Since~$w_{i+1}$ is in a cycle in the subgraph~$G[V' \setminus R_i]$ and $V' \setminus R_{i+1} \subseteq H(S)$, we have $w_{i+1} \in H(S)$, which implies that $V' \setminus R_i \subseteq H(S)$. We then conclude that $S$ is a hull set.

Additionally, notice that~$v$ does not belong to any cycle composed only of elements of~$S$, as~$v$ does not belong to any cycle composed only of the vertices of the subgraph~$G[V' \setminus R_k]$, since~$R_k$ is the last one of its sequence. Therefore, we conclude that $v \notin I(S)$, as we wanted.

Given $G,v$ and $Q$, we can compute $H(Q)$ in cubic time, by computing, for each $k > 0$ and $q \notin I^{k-1}(Q)$, whether $q \in I^k(Q)$ in linear time with a~DFS. Finally, we can compute $R_k$, and consequently $S$, with an additional quadratic time by initially doing $R = \varnothing$ and successively computing whether there is a cycle in the subgraph $G[V' \setminus R]$ that has a vertex $v \in V \setminus H(Q)$ with a DFS. If there is such a cycle, we add $v$ to~$R$. Otherwise, the algorithm stops and returns $S = Q \cup (V \setminus (H(Q) \cup R))$. Each iteration can be done in linear time due to the DFS, and we have $O(|V(G)|)$ such iterations. This procedure computes $S$ from~$v, Q$, and $G$ in cubic time.
\end{proof}

\begin{theorem}
\label{thm:percolationtimek2}
It is possible to determine a hull set $S$ of a graph $G$ such that $I(S) \neq V(G)$, or that such hull set does not exist in cubic time.
\end{theorem}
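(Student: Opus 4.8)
The plan is to reduce the statement to the two preceding lemmas through the equivalence: a hull set $S$ with $I(S) \neq V(G)$ exists if and only if some set $Q \subseteq V$ satisfies $I^2(Q) \setminus I(Q) \neq \varnothing$. The forward direction is immediate, since if $S$ is a hull set with $I(S) \neq V(G)$ then $I(S)$ cannot be convex (a convex set containing $S$ equals $H(S) = V(G)$), so $I^2(S) \supsetneq I(S)$ and $Q = S$ works. For the converse I would pick $u \in I^2(Q) \setminus I(Q)$ and apply \Cref{lem:percolationNumberk2Extensao} to obtain a hull set $S \supseteq Q$ with $u \in I^2(S) \setminus I(S)$; then $u \in V(G) \setminus I(S)$, so $S$ is precisely a hull set with $I(S) \neq V(G)$. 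Composing this with \Cref{lem:percolationNumberk2Estrutura} turns the whole question into a search: a suitable $Q$ exists exactly when some pair of neighbors $v,w$ has $R_G(v,w) \neq \varnothing$.

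The algorithm then reads: (i) search the ordered adjacent pairs $(v,w)$, testing $R_G(v,w) \neq \varnothing$ by the linear-time procedure from the proof of \Cref{lem:percolationNumberk2Estrutura}, namely one DFS to build an induced cycle through $vw$ and an independent DFS to build a cycle $C'$ through $w$ avoiding $v$ in which $v$ has at most one neighbor in $V(C') \setminus \{w\}$; (ii) if no pair succeeds, report that no such hull set exists; (iii) otherwise use \Cref{lem:percolationNumberk2Estrutura} to produce, in linear time, a set $Q$ and a vertex $u \in I^2(Q) \setminus I(Q)$, and then invoke \Cref{lem:percolationNumberk2Extensao} once to extend $Q$ to a hull set $S$ with $u \in I^2(S) \setminus I(S)$, which is returned. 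Correctness is exactly the two equivalences above: the algorithm answers ``none'' precisely when no witnessing neighbor pair exists, and any $S$ it outputs satisfies $u \notin I(S)$, hence $I(S) \neq V(G)$.

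The main obstacle is the time accounting, not the logic. The single call to \Cref{lem:percolationNumberk2Extensao} is cubic and contributes $O(|V|^3)$; the delicate part is the search in step (i), since a naive sweep over all $O(|E|)$ ordered adjacent pairs with a linear check per pair does not obviously remain cubic for dense graphs. I would therefore try to batch the checks by a central vertex $w$: in a single traversal one can identify the non-bridge edges incident to $w$ (those admitting an induced cycle through $vw$) and the relevant cycles through $w$, aiming to test all neighbors $v$ of a fixed $w$ in $O(|V|+|E|)$ time and thus bound the total search by $O(|V|(|V|+|E|)) = O(|V|^3)$. Confirming that this per-vertex batching faithfully reproduces the per-pair criterion of \Cref{lem:percolationNumberk2Estrutura} is where I expect the real work to lie.
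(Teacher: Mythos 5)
Your proposal is correct and takes essentially the same route as the paper: sweep the adjacent pairs $(v,w)$ testing $R_G(v,w) \neq \varnothing$ via \Cref{lem:percolationNumberk2Estrutura}, and on success extend the resulting witness set to a hull set via \Cref{lem:percolationNumberk2Extensao}; the equivalence you spell out (including that a hull set $S$ with $I(S) \neq V(G)$ forces $I^2(S) \setminus I(S) \neq \varnothing$) is exactly the correctness argument the paper leaves implicit. Your one flagged concern, that the naive pair sweep might exceed cubic time on dense graphs, is unnecessary: the paper measures running time in the size of $G$, i.e., $|V|+|E|$ (cf.\ ``linear time in the size of $G$'' in \Cref{lem:percolationNumberk2Estrutura}), so $O(|E|)$ linear-time checks cost $O(|E|\cdot(|V|+|E|))$, which is only quadratic in the input size and comfortably within the cubic bound, making the per-vertex batching you were bracing for superfluous.
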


\begin{proof}
Let $G = (V, E)$ be the input graph. The algorithm that computes a hull set $S$ of a graph $G$ such that $I(S) \neq V(G)$ if such set exists, consists initially of checking whether there is a pair of vertices $v $ and $w$ such that $R_G(v,w) \neq \varnothing$ in cubic time. If there is no such pair of vertices, the algorithm stops indicating that there is no hull set $S$ of $G$ such that~$I(S) \neq V(G)$. Otherwise, the algorithm computes in linear time a set $S$ such that~$I^2(S) \setminus I(S) \neq \varnothing$, and, finally, in cubic time, extends this set $S$ into a hull set $S' \supseteq S$ such that $I(S') \neq V(G)$. The correctness and complexity of the algorithm follow directly from \Cref{lem:percolationNumberk2Estrutura,lem:percolationNumberk2Extensao}.
\end{proof}

The following corollary is a direct consequence of \Cref{thm:percolationtimek2}.
\begin{corollary}
The problem \nameref{prob:percolationNumber} can be decided in cubic time for $k=2$.
\end{corollary}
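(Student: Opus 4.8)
The plan is to reduce the $k=2$ case of the decision problem directly to \Cref{thm:percolationtimek2} by unwinding the definition of percolation time. First I would recall that $pn(G)$ is the largest integer $k>0$ for which some hull set $S$ satisfies $I^{k-1}(S) \neq V(G)$, with $pn(G)=0$ when no such integer exists. Specializing to $k=2$, the condition $pn(G) \geq 2$ holds precisely when there exists a hull set $S$ of $G$ with $I^{2-1}(S) = I(S) \neq V(G)$.

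The key observation is then that this is exactly the existence question already resolved by \Cref{thm:percolationtimek2}: that theorem produces, in cubic time, a hull set $S$ with $I(S) \neq V(G)$ whenever one exists, and otherwise correctly reports that none exists. Therefore, to decide whether $pn(G) \geq 2$, I would simply invoke the algorithm of \Cref{thm:percolationtimek2} and answer affirmatively if and only if it returns such a hull set. Correctness is immediate from the equivalence above, and the cubic running time is inherited directly from the theorem.

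The only point requiring care — and the closest thing to an obstacle — is making the definitional translation precise: one must check that the exponent $k-1$ in the definition of percolation time evaluates to $1$ at $k=2$, so that the relevant condition is exactly $I(S) \neq V(G)$, and that any hull set witnessing $I(S)\neq V(G)$ indeed certifies $pn(G)\geq 2$. Since we only require the lower bound $pn(G)\geq 2$, no analysis of the percolation process beyond the first interval step is needed, and the symmetric direction (that $pn(G)\geq 2$ forces the existence of such an $S$) is immediate from the definition. Consequently, there is no substantive combinatorial work left after \Cref{thm:percolationtimek2}, and the corollary follows with essentially no additional effort.
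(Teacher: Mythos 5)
Your proposal is correct and matches the paper's approach: the paper itself gives no separate argument, stating the corollary is a direct consequence of \Cref{thm:percolationtimek2}, and your definitional unwinding ($pn(G)\geq 2$ iff some hull set $S$ has $I(S)\neq V(G)$, which is exactly what the theorem's cubic-time algorithm decides) is precisely that implicit reasoning.
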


\subsection{Percolation Time is \texorpdfstring{\NP}{NP}-complete for fixed \texorpdfstring{$k \geq 9$}{k ≥ 9}}

In this subsection, we show that the problem \nameref{prob:percolationNumber} is \NP-complete even for fixed $k \geq 9$. First, let us prove the following about the gadget in \Cref{fig:percolationNumberk9pendant}, which we call a \emph{perpetuation gadget}.

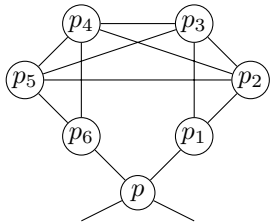
\begin{figure}[htb]
\centering
\begin{tikzpicture}[scale=1.5,auto]
    \tikzset{vertex/.style={draw, circle, minimum size=13pt,inner sep=1pt, font=\footnotesize}};
    \def \angle {51.43}
    \def \radius {1}
    \node[vertex] (q) at (0,0) {$p$};
    \node[vertex] (u1) at (0.5,0.5) {$p_1$};
    \node[vertex] (u2) at (-0.5,0.5) {$p_6$};
    \node[vertex] (u3) at (1,1) {$p_2$};
    \node[vertex] (u4) at (-1,1) {$p_5$};
    \node[vertex] (u5) at (0.5,1.5) {$p_3$};
    \node[vertex] (u6) at (-0.5,1.5) {$p_4$};

    \draw[-] (u1) to (u3) to (u5) to (u6) to (u4) to (u2) to (q) to (u1) to (u5) to (u4) to (u3) to (u6) to (u2);
    \draw[-] (-0.5,-0.25) to (q) to (0.5,-0.25);
    
\end{tikzpicture}
\caption{\emph{Perpetuation gadget}.}
\label{fig:percolationNumberk9pendant}
\end{figure}

\begin{lemma}
\label{lemma:percolationNumberk9pendant}
Let $G$ be a graph containing an induced subgraph $H$ isomorphic to the graph in \Cref{fig:percolationNumberk9pendant}, such that all vertices of $H$, except possibly for $p$, have no neighbors outside $H$. For any hull set $S$ of $G$, $V(H) \subseteq I^4(S)$.
\end{lemma}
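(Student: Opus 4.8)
The plan is to exploit that $p$ is the unique vertex of $H$ with neighbours outside $H$, so it acts as a cut vertex separating $\{p_1,\dots,p_6\}$ from the rest of $G$. First I would record the local adjacencies inside $H$: the set $\{p_2,p_3,p_4,p_5\}$ induces a $K_4$, the vertex $p_1$ is adjacent to $p,p_2,p_3$, the vertex $p_6$ is adjacent to $p,p_4,p_5$, and $p$ is adjacent inside $H$ only to $p_1$ and $p_6$. From the cut-vertex property it follows that every cycle of $G$ passing through any of $p_1,\dots,p_6$ lies entirely in $H$, since such a cycle cannot use an outside vertex without traversing $p$ twice. Consequently, for $u\in\{p_1,\dots,p_6\}$ and any $X\subseteq V(G)$, whether $u\in I(X)$ depends only on $X\cap V(H)$ and on cycles inside $H$; only $p$ itself can additionally be produced by a cycle that uses outside vertices.

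Next I would reduce the statement to the internal dynamics of $H$. Writing $A_0=S\cap V(H)$ and letting $J$ denote the cycle-interval operator of the induced subgraph $H$, the previous paragraph gives $I^i(S)\cap V(H)\supseteq J^i(A_0)$, while $p$ can only appear earlier under $I$ than under $J$. Because $S$ is a hull set of $G$, all of $\{p_1,\dots,p_6\}$ lie in $H(S)$, which forces $A_0$ (possibly with $p$ adjoined for free by an outside cycle) to generate all of $\{p_1,\dots,p_6\}$ internally. I would then verify the key propagation facts inside $H$: any two vertices of the $K_4$ present in a set pull in the remaining two in one round; $p_1$ is added as soon as $p_2,p_3$ are present and $p_6$ as soon as $p_4,p_5$ are; and $p$ is added once $p_1,p_6$ are present, since they are then joined through the clique. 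It also helps to note that no clique vertex can ever be added while fewer than two clique vertices are present, because each of $p_2,p_3,p_4,p_5$ has only one neighbour outside the clique.

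With these facts I would classify, up to the automorphisms of $H$ (which swap $p_2\leftrightarrow p_3$, swap $p_4\leftrightarrow p_5$, and exchange the two halves $p_1,p_2,p_3$ and $p_6,p_4,p_5$), the seeds $A_0$ that generate $\{p_1,\dots,p_6\}$, and then bound the number of rounds. The bottleneck is always reaching two clique vertices: a generating seed either already contains two clique vertices, or contains one clique vertex together with the matching $p_1$ or $p_6$, which yields a second clique vertex in one round. After that, one round completes the $K_4$, one round adds $p_1$ and $p_6$, and one final round adds $p$, so that $\{p_1,\dots,p_6\}\subseteq I^3(S)$ and $p\in I^4(S)$ in every case; the extremal seed is of the form $\{p_1,p_2\}$, which indeed requires all four rounds. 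The main obstacle I anticipate is the bookkeeping around $p$: I must argue carefully that supplying $p$ for free from outside can neither enable a non-generating seed nor be required, the point being that $p$ reaches the clique only through $p_1$ or $p_6$, so an externally produced $p$ is useless for propagation until one of $p_1,p_6$ is already present and connected to the clique. Once this is settled, monotonicity of $I$ guarantees that the external structure can only speed up the filling, so the internal bound of four rounds transfers to $V(H)\subseteq I^4(S)$.
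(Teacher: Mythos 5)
Your route is sound and genuinely different from the paper's. The paper first passes, without loss of generality, to a \emph{minimal} hull set, uses a pigeonhole/replacement argument to force $|S \cap (V(H)\setminus\{p\})| = 2$, argues that the only admissible pairs (even with $p$ given for free) lie inside $A=\{p_1,p_2,p_3\}$, $B=\{p_4,p_5,p_6\}$ or $C=\{p_2,p_3,p_4,p_5\}$, and checks that every such pair percolates $H$ within four rounds. You avoid minimality altogether: you classify \emph{all} seeds $A_0=S\cap V(H)$ that can internally generate $\{p_1,\dots,p_6\}$ (with $p$ possibly supplied from outside) and transfer the four-round bound by monotonicity of $I$. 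Your classification (at least two clique vertices, or one clique vertex plus the matching vertex among $p_1,p_6$) describes exactly the same family as the paper's admissible pairs, your isolation of the cut vertex $p$ and the observation that an externally generated $p$ cannot activate a dead seed are correct, and the extremal seed $\{p_1,p_2\}$ does need exactly four rounds. Your version yields a bit more (a statement about arbitrary generating seeds rather than only minimal hull sets) at the cost of a longer case analysis.

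There is, however, one false supporting claim, and it contradicts your own classification: ``no clique vertex can ever be added while fewer than two clique vertices are present.'' Counterexample: if only $p_1$ and $p_3$ are present (a single clique vertex), then $p_2$ enters at the next step through the triangle $p_1p_2p_3$; this is precisely the mechanism by which your second class of seeds (one clique vertex plus the matching outer vertex) gets off the ground. What you actually need, and what is true, is weaker: (i) while \emph{no} clique vertex is present, no clique vertex can ever be added, since each of $p_2,p_3,p_4,p_5$ has exactly one neighbour outside the clique and hence at most one present neighbour, so it lies on no cycle; and (ii) a set consisting of one clique vertex \emph{without} its matching outer vertex is closed under the interval operator of $H$, even if $p$ and the other outer vertex are included (e.g.\ $\{p_2,p_6,p\}$ admits no cycle through any of $p_1,p_3,p_4,p_5$, by direct inspection). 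Note that (ii) cannot be strengthened to your blanket claim: once a matching pair such as $p_3,p_1$ is present together with $p$ and $p_6$, the far clique vertex $p_4$ enters via the five-cycle $p_4p_3p_1pp_6$, not via a triangle, so clique vertices can indeed appear with only one clique vertex present. With (i) and (ii) substituted for the false claim, your classification of generating seeds and your round count go through, so the defect is local and repairable rather than fatal.
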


\begin{proof}
Without loss of generality, assume that $S$ is a minimal hull set of $G$, that is, any subset of $S$ is not a hull set of $G$, and let~$S' = S \cap (V(H) \setminus \{p\})$. Let ~$A = \{p_1,p_2,p_3\}$,~$B = \{p_4,p_5,p_6\}$, and~$C = \{p_2,p_3,p_4,p_5\}$, where the vertices~$p_1,p_2,p_3,p_4,p_5$, and $p_6$ are as indicated in \Cref{fig:percolationNumberk9pendant}.

We can infer that~$|S'| > 1$ because, otherwise, since $p$ is an articulation of~$G$, we could never have $V(H) \subseteq H(S)$. Furthermore, for any $u \neq v$ in~$V(H)$, if either~$u,v \in A$ or~$u,v \in B$, then~$H(\{u ,v\}) = V(H)$. So, if~$|S'| \geq 3$, $S$ would not be minimal, since, by the pigeonhole principle, if~$|S'| \geq 3$, or~$|S \cap A| \geq 2$ or~$|S \cap B| \geq 2$. Hence, we can either choose any two vertices $v,w \in S' \cap A$ or any two vertices $v,w \in S' \cap B$ so the set $(S \setminus S') \cup \{v,w\}$ is a hull set.

As a result,~$S' = \{u,v\}$. This implies that or~$u,v \in A$,~or $u,v \in B$,~or even~$u,v \in C$, because only then~$V(H) \subseteq H(S)$, even if $p \in S$. Thus, regardless of which of the three sets $u$ and $v$ are, $V(H) \subseteq I^4(S')$, which implies that~$V(H) \subseteq I ^4(S)$.
\end{proof}

Given that the vertex $p$ and two of its neighbors $u$ and $v$ form a $K_3$ in~$G$, the purpose of a perpetuation gadget is simply to ensure that, for any $k \geq 4$, if $v \in I^k(S) \setminus I^{k-1}(S)$, then $u \in I^{k+1}(S)$. Now, we can prove the main result of this subsection.

\begin{theorem}
\nameref{prob:percolationNumber} is \NP-complete for any fixed~$k \geq 9$.
\end{theorem}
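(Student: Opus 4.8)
The plan is to prove membership in $\NP$ and then give a polynomial-time reduction from a known $\NP$-complete problem. Membership is immediate: for fixed $k$ a certificate is a single set $S \subseteq V(G)$, and one checks in polynomial time both that $S$ is a hull set (iterate $I$ at most $|V(G)|$ times until it stabilizes and verify the limit equals $V(G)$) and that $I^{k-1}(S) \neq V(G)$ (iterate $I$ exactly $k-1$ times); if both hold then $pn(G) \geq k$. For hardness I would reduce from a suitable $\NP$-complete problem, the natural candidates being \kSAT{3} or the analogous fixed-time percolation problem in the $P_3$ convexity~\cite{benevides2015}. The governing intuition, borrowed from the cactus analysis in \Cref{lemma:percolationnumbercacto}, is that percolation time in the cycle convexity is driven by a chain of cycles in which each cycle's ``new'' vertex can be absorbed only one step after the previous cycle has fired. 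The perpetuation gadget of \Cref{lemma:percolationNumberk9pendant} turns this into a robust building block: since $V(H) \subseteq I^4(S)$ for every hull set $S$, once a triangle $\{p,u,v\}$ is anchored to such a gadget, the entry of $v$ at some time $t \geq 4$ forces $u$ into $I^{t+1}(S)$ and never earlier, because $p$ is already present and stable but no shorter cycle through $u$ exists. Chaining these gadgets yields a delay line of prescribed length, that is, a controllable percolation ``clock''.

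Concretely, I would assemble $G$ from three kinds of components: (i) a fixed preamble of perpetuation gadgets guaranteeing that every hull set has already filled a common core by a fixed time, which is where the constant $9$ originates, namely the gadget depth $4$ plus the length of the choice-and-verification stage; (ii) selection gadgets, one per variable or per binary choice of the source instance, each offering the hull set two admissible seedings that propagate infection along two internally identical routes of equal delay; and (iii) verification (clause) gadgets placed so that an unsatisfied configuration opens a short-circuiting cycle that lets some terminal vertex be absorbed strictly before the target time, whereas a satisfying configuration leaves that terminal vertex reachable only through the full-length delay chain. The reduction is arranged so that $pn(G) \geq k$ holds exactly when some seeding corresponds to a satisfying assignment, and padding the spine with extra perpetuation gadgets then handles every fixed $k > 9$.

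The two directions are argued as follows. For the forward direction, from a YES instance I exhibit an explicit hull set $S$, consisting of the seeding dictated by the assignment together with the vertices forced into every hull set by \Cref{lemma:percolationNumberk9pendant}, and track the sets $I^t(S)$ step by step, using the triangle-perpetuation property to show the terminal vertex enters only at time $k$, so $I^{k-1}(S) \neq V(G)$. For the converse, from any hull set $S$ with $I^{k-1}(S) \neq V(G)$ I read off, from the route along which the late vertex was infected, a consistent choice in each selection gadget, and argue that the absence of any short-circuit forces every verification gadget to be satisfied, yielding a satisfying assignment of the source instance.

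I expect the main obstacle to lie in controlling the percolation precisely in the converse direction. In the cycle convexity a vertex is absorbed as soon as it lies on \emph{any} cycle inside the current set, so the selection and verification gadgets must be designed so that no unintended cycle, in particular none created by the hull set's freedom to seed arbitrary vertices, can absorb a spine vertex ahead of schedule. Simultaneously ruling out such shortcuts and showing that a minimal hull set cannot otherwise ``cheat'' the clock while remaining a hull set is the delicate part; the perpetuation gadget is precisely the device that makes the per-step timing rigid enough for this argument to close.
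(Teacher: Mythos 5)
Your membership argument is fine and your overall strategy (reduce from \kSAT{3}, use the perpetuation gadget of \Cref{lemma:percolationNumberk9pendant} as a timing device, pad with extra gadgets for $k>9$) is the same as the paper's. But the proposal stops at the strategic level: the actual reduction --- the gadgets and the two-directional argument --- is never constructed, and that is precisely where the entire difficulty of the theorem lives. You write that you ``would assemble $G$'' from selection gadgets and verification gadgets with certain properties, and you yourself flag that ruling out unintended cycles and hull-set ``cheating'' is the delicate part; but naming the obstacle is not the same as overcoming it. An \NP-hardness proof of this kind \emph{is} its gadget construction together with the verification of both implications, so as it stands there is a genuine gap: no clause gadget, no concrete mechanism tying choices to the late vertex, and no argument that an adversarial hull set cannot beat the clock.

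Moreover, the architecture you sketch assigns the logical roles differently from the paper, in a way that would itself need justification. In the paper's reduction, clause satisfaction is \emph{not} enforced by a timing short-circuit at all: each clause gadget contains a co-convex set $U_i$, so every hull set must seed at least one literal vertex per clause --- clause coverage comes for free from the hull-set requirement. What the timing enforces is \emph{consistency} of the assignment: for every pair of opposite literals, a triangle vertex $y_{i,a,j,b}$ is attached to both $w^i_a$ and $w^j_b$, and all such $y$'s feed a single apex $x$, the only vertex that can possibly enter at step $9$; if two opposite literals were both selected, the corresponding $y$ would be absorbed a step early, dragging $x$ into $I^8(S)$ and killing the instance. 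Your plan instead has unsatisfied clauses opening short-circuits, which is closer to the $P_3$-convexity constructions of~\cite{benevides2015} and may be workable, but in the cycle convexity an unselected clause would more naturally make $S$ fail to be a hull set rather than make percolation fast, so the gadget you would need is not the one you describe. Finally, note that \Cref{lemma:percolationNumberk9pendant} only gives the upper-bound half of your ``clock'' claim (the neighbor enters at most one step later); the ``and never earlier'' half is a global property of the whole construction, not of the gadget, and must be proved by tracking $I^t(S)$ through the specific graph --- exactly the step your proposal leaves open.
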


\begin{proof}
The problem is in \NP, since we can verify whether $S$ is a hull set of~$G$ such that $I^{k-1}(S) \neq V(G)$ in polynomial time.

In order to prove its \NP-hardness for $k=9$, we present a polynomial reduction from \kSAT{3}. In the end, we show how to generalize the reduction for any fixed $k > 9$. This reduction uses the same key ideas as the ones used in~\cite{benevides2015}, where the authors show that \nameref{prob:percolationNumber} is \NP-complete for any fixed $k \geq 4$ in the $P_3$ convexity.

Let $\varphi$ be an instance of \kSAT{3}, where the set of its clauses is denoted by $\mathcal{C} = \{C_1,C_2,\ldots,C_m\}$. Here, $C_i = \{\ell^i_1,\ell^i_2, \ell^i_3\}$ represents the literals of the $i$-th clause of $\varphi$, and the set of its variables is denoted by $\mathcal{X} = \{X_1,X_2, \ldots,X_n\}$. We proceed by constructing a graph $G$ from $\varphi$.

\begin{itemize}
    \item For each clause~$C_i$, we construct the gadget in \Cref{fig:percolationNumberk9}, where the loops at the vertices~$q_1^i,q_2^i,q_3^i,r_1^i,r_2^i,r_3^i$ and $z^i$ represent the perpetuation gadget, with each of these vertices at the same position as the vertex~$p$. Let~$V_i$ be the set of vertices in the clause gadget~$C_i$;
    \item For each pair of opposite literals $\ell^i_a$ and $\ell^j_b$ of the same variable, add the vertices~$y_{i,a,j,b}$ and~$y'_{i,a ,j,b}$. Form a~$K_3$ with the vertices~$w^i_a,w^j_b$ and~$y_{i,a,j,b}$ and add a perpetuation gadget with~$y'_{i,a,j,b}$ at the same position as the vertex~$p$. Let~$Y$ be the set of vertices~$y_{i,a,j,b}$ and~$Y'$ be the set of vertices~$y'_{i,a,j,b}$ added in this step;
    \item Add a vertex $x$ and, for each pair of vertices $y_{i,a,j,b}$ and $y'_{i,a,j,b}$, form a~$K_3$ with the vertices $y_{i,a,j,b},y'_{i,a,j,b}$ and $x$.
\end{itemize}

\begin{figure}[htb]
\centering
\begin{tikzpicture}[scale=0.85,auto]
    \tikzset{vertex/.style={draw, circle, minimum size=10pt,inner sep=1pt,font=\footnotesize}};
    \def \diff {17}
    \def \dist {1.25}
    \node[vertex] (z) at (0,0) {$z$};
    \draw[-] (z) edge[loop,in=270,out=330,looseness=6] (z);
    \foreach \i [evaluate=\i as \angle using ((\i-1)*120)] in {1,2,3}{
        \node[vertex,fill=black!25] (u\i) at (\angle:\dist) {$u_{\i}$};
        \node[vertex,fill=black!25] (v\i) at (\angle - \diff:2*\dist) {$v_{\i}$};
        \node[vertex] (w\i)  at (\angle:3*\dist) {$w_{\i}$};
        \node[vertex] (q\i) at (\angle-2.3*\diff:1.4*\dist) {$q_{\i}$};
        \draw[-] (u\i) to (w\i) to (v\i) to (u\i) to (q\i) to (v\i) to (u\i) to (z);
        \draw[-] (q\i) edge[loop,in=\angle-100,out=\angle-170,looseness=7] (q\i);
    }
    \foreach \i/\im [evaluate=\i as \angle using ((\i-1)*120 + 60)] in {1/2,2/3,3/1}{
        \node[vertex] (r\i) at (\angle:3*\dist-0.5) {$r_{\i}$};
        \draw[-] (u\i) to (u\im);
        \draw[-] (w\i) to (r\i) to (w\im);
        \draw[-] (w\i) to[bend right=17] (w\im);
        \draw[-] (r\i) edge[loop,in=\angle+45,out=\angle-45,looseness=5] (r\i);
    }
\end{tikzpicture}
\caption{The gadget for $C_i$. We omitted the superscript from each vertex's label. Each loop in a vertex $x$ represents a perpetuation gadget with $x$ being in the same position as the vertex $p$. The co-convex set $U_i$ is represented by the gray vertices.}
\label{fig:percolationNumberk9}
\end{figure}

Since each step of the construction has polynomial complexity, the construction has polynomial complexity. Let us prove that $\varphi$ is satisfiable if and only if there exists a hull set $S$ of $G$ such that $I^8(S) \neq V(G)$.

Suppose $\varphi$ is satisfiable. Given an assignment that satisfies $\varphi$, let $S$ be the subset of vertices of $G$ containing all vertices $p_1$ and $p_2$ of each perpetuation gadget in $G$ and, for each $1 \leq i \leq m$, it contains any vertex $u^i_a$ such that $\ell^i_a$ is evaluated to true, but exactly one for each $i$. By that definition of $S$, for each $1 \leq i \leq m$, $\{q^i_1,q^i_2,q^i_3,r^i_1,r^i_2,r^i_3,z^i\} \subseteq I^4(S) \setminus I^3(S)$, and $Y' \subseteq I^4(S)$. Thus, $\{u^i_1,u^i_2,u^i_3\} \subseteq I^5(S)$, $\{v^i_1,v^i_2,v^i_3\} \subseteq I^6(S)$, and, consequently, $\{w^i_1,w^i_2,w^i_3\} \subseteq I^7(S)$. Therefore, we conclude that, for all $1 \leq i \leq m$, $V_i \subseteq I^7(S)$.

Note that, for any $1 \leq a \leq 3$, either $w^i_a \in I^7(S) \setminus I^6(S)$ if $u^i_a \notin S$, or $w^i_a \in I^6(S) \setminus I^5(S)$ if $u^i_a \in S$. Since for any two opposite literals $\ell^i_a$ and $\ell^j_b$, exactly one of them evaluates to true, then, for two such literals, at most one of the vertices $u^i_a$ and $u^j_b$ is in $S$. This means that, for any two opposite literals $\ell^i_a$ and $\ell^j_b$, at most one of the vertices $w^i_a$ and $w^j_b$ is in $I^6(S) \setminus I^5(S)$, while those that are not in $I^6(S) \setminus I^5(S)$ are in $I^7(S) \setminus I^6(S)$.

This implies that every vertex $y_{i,a,j,b} \in I^8(S) \setminus I^7(S)$, since at most one of its neighbors is in $I^6(S) \setminus I^5(S)$. Therefore, $Y \subseteq I^8(S) \setminus I^7(S)$ and, consequently, $x \in I^9(S) \setminus I^8(S)$, since $Y' \subseteq I^4(S)$. In conclusion, we have that $V(G) = I^9(S)$ and $I^8(S) \neq V(G)$.

Conversely, let $S$ be a hull set of $G$ such that $I^8(S) \neq V(G)$. According to \Cref{lemma:percolationNumberk9pendant}, $\{q^i_1,q^i_2,q^i_3,r^i_1,r^i_2,r^i_3,z^i\} \subseteq I^4(S)$ for each $1 \leq i \leq m$, and~$Y' \subseteq I^4(S)$. Let $U_i=\{u^i_1,u^i_2,u^i_3,v^i_1,v^i_2,v^i_3\}$. For any $1 \leq i \leq m$, $U_i$ is co-convex, since there is no path in $G - U_i$ whose endpoints are both adjacent to the same vertex in $U_i$. This implies that at least one vertex in $U_i$ is also in $S$, for each $1 \leq i \leq m$.

Note that if $S \subseteq S'$ and $v \in I^k(S) \setminus I^{k-1}(S)$, then $v \in I^k(S')$. From the structure of the gadget in \Cref{fig:percolationNumberk9}, we can see that, for any $d \in U_i$, $V_i\subseteq I^7(\{d\})$. This implies that $V_i\subseteq I^7(S)$, since $S \cap U_i \neq \varnothing$. Therefore, $Y \subseteq I^8(S)$ and, consequently, $V(G) \setminus \{x\} \subseteq I^8(S)$. However, since we assume initially that $I^8(S) \neq V(G)$, we conclude that $x \in I^9(S) \setminus I^8(S)$.

Since $x \in I^9(S) \setminus I^8(S)$, $Y \subseteq I^8(S) \setminus I^7(S)$, as any vertex $y_{i,a,j,b} \in I^7(S)$ would imply that $x \in I^8(S)$. Hence, for each $y_{i,a,j,b} \in Y$, at least one of its neighbors, $w^i_a$ and $w^j_b$, is in $I^7(S) \setminus I^6(S)$, that is, we cannot have $w^i_a,w^j_b \in I^6(S)$. 

We then define the following assignment for the variables in $\mathcal{X}$: for each~$w^i_a \in I^6(S)$, we set the variable $X_k$ to true if and only if $\ell^i_a$ is the positive literal of $X_k$. In other words, for each $w^i_a \in I^6(S)$, we set $X_k$ to a value so that $\ell^i_a$ evaluates to true. After making these assignments, if there are still variables remaining without a value assigned, we set them all to true. 

We claim that our assignment is well defined, since we assign a value to all variables and, if it is assigned both true and false to any given variable, then there would be two vertices $w^i_a,w^j_b \in I^6(S)$ such that $\ell^i_a$ and $\ell^j_b$ are opposite literals, which would mean that $y_{i,a,j,b} \in I^7(S)$, a contradiction as~$Y \subseteq I^8(S) \setminus I^7(S)$.

Remember that $S \cap U_i \neq \varnothing$ for each $1 \leq i \leq m$, which means that, for each $1 \leq i \leq m$, there is at least one $1 \leq a \leq 3$ such that $w^i_a \in I^6(S)$. This allows us to assert that there is at least one literal that evaluates to true in each clause, which implies that our assignment satisfies $\varphi$.

To achieve a reduction for any fixed $k > 9$, we simply repeat the following two-step procedure $k-9$ times in the constructed graph: (i) we add one perpetuation gadget along with one additional vertex $t$; and (ii) we add the necessary edges to form a $K_3$ with the vertices $x$, $t$ and the vertex $p$ of the newly added perpetuation gadget, designating $t$ as the new vertex $x$. The proof of equivalence between the instances follows the same reasoning as the one presented for $k = 9$.
\end{proof}

\section{Final Remarks}
\label{sec:finalremarks}

In this paper, we show that \nameref{prob:convexityNumber} is \NP-complete and \W[1]-hard when parameterized by $k$ for thick spiders, but polynomial for extended $P_4$-laden graphs. By demonstrating the \NP-completeness and \W[1]-hardness of \nameref{prob:convexityNumber}, we provide valuable insights into the inherent computational challenges associated with this problem. Moreover, our polynomial-time algorithm for extended $P_4$-laden graphs offers a promising avenue for efficient computation in graphs with few $P_4$s. We leave open for further investigation the complexity of \nameref{prob:convexityNumber} for classes related to the hierarchy of classes with few $P_4$s, such as $(q,q-4)$-graphs, $P_5$-free graphs and $P_6$-free graphs, as well as other well-known classes of graphs such as planar graphs and bipartite graphs.

Regarding the \nameref{prob:percolationNumber} problem, we present a linear algorithm when the input graph is a cactus and a polynomial algorithm for fixed~$k \leq 2$. On the negative side, we show that it is \NP-complete even when~$k$ is fixed at any value at least $9$. The polynomial algorithms for the \nameref{prob:percolationNumber} for cacti and for fixed $k \leq 2$ represent a noteworthy effort to advance our understanding of the tractability of this problem for very restricted instances. On the other hand, the \NP-completeness of \nameref{prob:percolationNumber}, even for fixed values of $k \geq 9$, highlights the inherent difficulty of the problem for more general instances. We leave open for future exploration the complexity of \nameref{prob:percolationNumber} for bipartite graphs and for any fixed value of $k$ between $3$ and $8$. In fact, given the nature of the cycle convexity, which seems to be more complex than the $P_3$ convexity, we conjecture that this problem is \NP-complete for any fixed~$k \geq 4$ in general graphs and for any fixed~$k \geq 5$ in bipartite graphs, as is the case for the $P_3$ convexity \cite{benevides2015,marcilon2018}.

\section*{Acknowledgments}

The statements of some of the results in this paper appeared in ETC-2024 (Encontro de Teoria da Computação). This study was financed in part by the Coordenação de Aperfeiçoamento de Pessoal de Nível Superior (CAPES) -- Finance Code 001 and by the Conselho Nacional de Desenvolvimento Científico e Tecnológico (CNPq) -- Universal [422912/2021-2].

\bibliographystyle{elsarticle-num} 
\bibliography{main}

\end{document}